  \newcommand{\eqnum}{\leavevmode\hfill\refstepcounter{equation}\textup{\tagform@{\theequation}}} 
\newcommand{\sfA}{\ensuremath{\mathsf{A}}}
\newcommand{\sfB}{\ensuremath{\mathsf{B}}}
\newcommand{\A}{\ensuremath{\mathcal{A}}}
\newcommand{\bk}{{\bar{k}}}
\renewcommand{\d}{\mathrm{d}}
\newcommand{\e}{\mathrm{e}}
\newcommand{\E}{\mathcal{E}}
\newcommand{\bw}{\mathrm{bw}}
\newcommand{\fw}{\mathrm{fw}}
\newcommand{\F}{\mathcal{F}}
\newcommand{\fast}{\mathrm{fast}}
\newcommand{\slow}{\mathrm{slow}}
\renewcommand{\H}{\mathcal{H}}
\renewcommand{\L}{\mathcal{L}}
\newcommand{\M}{\mathcal{M}}
\mathchardef\mhyphen="2D % Define a "math hyphen"
\renewcommand{\P}{\mathcal{P}}
\newcommand{\Q}{\mathcal{Q}}
\newcommand{\R}{\mathcal{R}}
\newcommand{\RR}{\mathbb{R}}
\newcommand{\tot}{\mathrm{tot}}
\newcommand{\tr}{\mathrm{tr}}
\newcommand{\re}{\mathrm{re}}
\newcommand{\bW}{{\bar{W}}}
\newcommand{\W}{\mathcal{W}}
\newcommand{\Z}{\mathbb{Z}}
\newcommand{\X}{\mathcal{X}}
\newcommand{\Y}{\mathcal{Y}}
\newcommand{\tp}{^{\mathsf{T}}}
\let\div\relax
\DeclareMathOperator\div{div}
\DeclareMathOperator\grad{\nabla\!}
\DeclareMathOperator\Grad{grad}
\DeclareMathOperator\Prob{Prob}
\DeclareMathOperator\Ran{Ran}
\DeclareMathOperator\lapl{\Delta}
\newcommand{\super}[1]{^{\scriptscriptstyle{(#1)}}}
\newtheorem{theorem}{Theorem}[section]
\newtheorem{lemma}[theorem]{Lemma}
\newtheorem{proposition}[theorem]{Proposition}
\newtheorem{definition}[theorem]{Definition}
\newenvironment{remark}% % remark environment with square at the end
  {\par\medbreak\refstepcounter{theorem}%
    \noindent\textbf{Remark~\thetheorem. }}%
  {\qed\par\medskip}
\title{Gradient and Generic systems in the space of fluxes, applied to reacting particle systems}
\author{D.R. Michiel Renger}
\affil{\small{WIAS Berlin\\Mohrenstrasse 39\\10117 Berlin, Germany\\tel: +49 30 20372-470\\fax: +49 30 20372-404}}
\date{\today}
\begin{document}

\maketitle

%\begin{abstract}
%\end{abstract}

%\tableofcontents

%AMS MSC codes
%60F10 Large deviations
%60J27 Continuous-time Markov processes on discrete state spaces
%80A30 Chemical kinetics
%82C22 Interacting particle systems
%82C35 Irreversible thermodynamics, including Onsager-Machlup theory
%(35Q82 PDEs in connection with statistical mechanics)

\begin{abstract}
In a previous work we devised a framework to derive generalised gradient systems for an evolution equation from the large deviations of an underlying microscopic system, in the spirit of the Onsager-Machlup relations. Of particular interest is the case where the microscopic system consists of random particles, and the macroscopic quantity is the empirical measure or concentration. In this work we take the particle flux as the macroscopic quantity, which is related to the concentration via a continuity equation. By a similar argument the large deviations can induce a generalised gradient or Generic system in the space of fluxes. In a general setting we study how flux gradient or generic systems are related to gradient systems of concentrations. The arguments are explained by the example of reacting particle systems, which is later expanded to include spatial diffusion as well.
\end{abstract}

\section{Introduction}

By the Boltzmann-Einstein relation, the free energy of a system is inherently related to the fluctuations of an underlying microscopic particle system. In this sense, two systems with the same macroscopic behaviour can be driven by completely different free energies, if their corresponding microscopic systems are different. Therefore, one of the main objectives of (equilibrium) statistical mechanics is to derive the `physically correct' macroscopic free energy from fluctuations in microscopic systems. A similar principle can be applied to systems that evolve over time, where dynamic fluctuations may lead to a gradient flow, driven by the free energy. For stochastically reversible systems and close to equilibrium this is the classic Onsager-Machlup theory~\cite{Onsager1931I,Onsager1953I}. Such relations are known to hold for many reversible dynamics, not necessarily close to equilibrium~\cite{Bertini2004}. More recently, it was shown that microscopic reversibility always implies the emergence of a macroscopic gradient flow~\cite{MielkePeletierRenger2014}, but in this generality one needs to allow for so-called generalised gradient flows. In brief, a generalised gradient structure (GGS) is defined by a possibly non-linear relation between velocities and affinities. Although there are exist some non-reversible models that lead to macroscopic gradient flows~\cite{Dietert2015}, these are considered non-typical; so in order to understand systems with non-reversible microscopic fluctuations, one needs to look for even further (thermodynamically consistent) generalisations of a gradient flow.

One such generalisation could be Generic~\cite{Ottinger2005}, which couples a gradient flow to a Hamiltonian system in such a way that the Hamiltonian energy is conserved and the free energy decays over time. Clearly, in order to derive Generic structures from dynamical large deviations in a general setting, one again needs to allow for non-linear relations, just like generalised gradient flows. One is thus lead to study generalised Generic structures (GGEN)~\cite{Mielke2011GGEN}. A recent work shows a set of necessary and sufficient conditions of which microscopic fluctuations induce such a generalised Generic structure~\cite{KLMP2017phys,KLMP2018math}.

Another generalisation worth considering is to study, in addition to the macroscopic state variables, the corresponding fluxes. These fluxes hold more information than the state variables due to the possible occurrence of\, `divergence-free' fluxes that do not alter the states. The study of flux large deviations and corresponding thermodynamic properties is a key concept in Macroscopic Fluctuation Theory~\cite{Bertini2015MFT}. With the fluctuations on the fluxes one could for example extract a generalisation of a gradient structure: (what we call) a force structure \cite{MaesNetocny2008,KaiserJackZimmer2017,Maes2017,Renger2017}, where the affinity/driving force may no longer be the gradient of some free energy.

In this paper we pursue another flux-based point of view, based on the observation that a GGS and GGEN can be interpreted as a free energy balance. If there would be work done that results in a divergence-free flux, than one might expect a gap the energy balance, so that the such systems can not induce a GGS or GGEN. Could one then still have GGS/GGEN structure in the space of fluxes? The main point of this paper is that this is generally impossible. We will see that if the fluctuations induce a GGS or GGEN in the space of fluxes, this is, up to some conditions, equivalent to the fluctuations inducing a GGS or GGEN in the state space.

Our leading example and main application will be that of an isothermal chemical reaction network, as studied in~\cite{MielkePattersonPeletierRenger2017}. In Section~\ref{sec:chemical reactions} we recall the main arguments from that paper, applied to concentrations undergoing reactions. In Section~\ref{sec:reaction fluxes} we expand these ideas and show how they can be applied to reaction fluxes. Based on this example, we then develop an abstract theory about induced GGSs and GGENs in flux space, in Section~\ref{sec:general theory}. To show the generality of these principles, we then show in Sections~\ref{sec:diffusion} how the theory applies to transport fluxes in a diffusing particles model, and in Section~\ref{sec:reaction-diffusion} we combine the arguments from Sections~\ref{sec:reaction fluxes} and \ref{sec:reaction-diffusion} to derive results for a transport and reaction fluxes in a simple reaction-diffusion model.

\section{Leading example 1: chemical reactions}
\label{sec:chemical reactions}

In this section we explain the main concepts, mostly by reiterating the arguments of \cite{MielkePeletierRenger2014} and \cite{MielkePattersonPeletierRenger2017}. In particular, we argue that large deviations/fluctuations provide a `physically correct' GGS for the evolution of concentrations undergoing chemical reactions. 

Consider a network of isothermal chemical reactions, for example:
\begin{align*}
  2\mathsf{Na} + \mathsf{Cl}_2 \to 2\mathsf{NaCl} &&\text{and}&&  2\mathsf{NaCl} \to 2\mathsf{Na} + \mathsf{Cl}_2.
\end{align*}
We denote the set of species by $\Y$ (in this example $\{\mathsf{Na},\mathsf{Cl}_2,\mathsf{NaCl}\}$) and the set of reactions by $\R$; for each reaction $r$ we consider a forward and backward reaction (so that here $\R$ consists of one element). The stoichiometric coefficients are denoted by $(\alpha_{r,y})_{y\in\Y}$ for the forward reactants (here $(2,1,0)$) and $(\beta_{r,y})_{y\in\Y}$ for the forward products (here $(0,0,2)$), which yields the state change matrix $\Gamma:=(\gamma_{r,y}):=(\beta_{r,y}-\alpha_{r,y})$. The evolution of the concentrations $\rho_t\in\RR^\Y$ is then described by the \emph{Reaction Rate Equation},
\begin{equation}
  \dot\rho_t = \sum_{r\in\R}\gamma_r \big(k_{r,\fw}(\rho_t)-k_{r,\bw}(\rho_t)\big)=\Gamma \bk(\rho_t),
\label{eq:RRE1}
\end{equation}
with concentration-dependent reaction rates $\bk_r:=k_{r,\fw} - k_{r,\bw}$.

Typically (and often in this paper), the reaction rates will be of the form
\begin{align}
  k_{r,\fw}(\rho):=\kappa_{r,\fw} \rho^{\alpha_r}
  &&\text{and}&&
  k_{r,\bw}(\rho):=\kappa_{r,\bw} \rho^{\beta_r}
\label{eq:RRE1 mass-action kinetics}
\end{align}
for some constants $\kappa_{r,\fw},\kappa_{r,\bw}$, using the notation $\rho^{\alpha_r}=\prod_{y\in\Y} \rho_y^{\alpha_{r,y}}$. In that case the network is said to be of \emph{mass-action kinetics}. For more background on chemical reaction networks we refer to the survey~\cite{AndersonKurtz2011}. More details about the fluctuations and induced GGSs for chemical reaction networks can be found in \cite{MielkePattersonPeletierRenger2017}; for completeness we shall recall these results in this section.

To notationally stress the similarity and differences between different concepts throughout this paper, we shall always net quantities by a bar $(\bar{\quad}$), and we distinguish functionals on state space from functionals on flux space by a hat ($\hat\quad$). In this section we study concentrations only, which we consider to be states.

\subsection{Reacting particle system}

A classical microscopic particle system underlying the evolution~\eqref{eq:RRE1} is the following~\cite{Kurtz1972}. Let $V$ be a large, well-mixed volume that contains, at time $t$, a total number $N\super{V}_t$ of particles of species $Y_{t,i}\super{V}, i=1,\hdots,N\super{V}_t$. A reaction $r$ occurs randomly with some propensities (jump rates) $\lambda\super{V}_{r,\fw}(\rho), \lambda\super{V}_{r,\bw}(\rho)$. Whenever a forward reaction $r$ occurs, $\alpha_r$ particles are removed and $\beta_r$ particles are created, and vice versa for a backward reaction. Hence each reaction requires a cumbersome relabelling of particles $Y_{t,i}$. It is therefore more practical to work directly with the (particles per volume) concentration $\rho_{t,y}\super{V}:=\tfrac1V\#\{Y_{t,i}=y, i=1,\hdots N_t\super{V}\}$. This quantity will also play the role of the macroscopic state variable. Whenever a forward or backward reaction $r$ takes place, the concentration can now be simply updated by a jump $\rho_t\super{V}=\rho\super{V}_{t^-} \pm \tfrac1V\gamma_r$. Then the $\Y$-dimensional vector $\rho\super{V}_t$ is a Markov jump process, which satisfies the master equation
\begin{align*}
  \dot {\hat{P}}_t\super{V}(\rho) = \sum_{r\in\R}
    &\lambda\super{V}_{r,\fw}(\rho-\tfrac1V\gamma_r)\hat P_t\super{V}(\rho-\tfrac1V\gamma_r) - \lambda\super{V}_{r,\fw}(\rho)\hat P_t\super{V}(\rho) +\\
    &\lambda\super{V}_{r,\bw}(\rho+\tfrac1V\gamma_r)\hat P_t\super{V}(\rho+\tfrac1V\gamma_r) - \lambda\super{V}_{r,\bw}(\rho)\hat P_t\super{V}(\rho).
\end{align*}
It will be beneficial to work with the corresponding generator, which is the adjoint of the right-hand side of the master equation, with respect to the dual pairing $\langle\hat P\super{V}_t,f\rangle=\sum_\rho f(\rho)\,\hat P\super{V}_t(\rho)$ with an arbitrary test function:
\begin{equation}
  (\hat \Q\super{V}f)(\rho) := \sum_{r\in\R} \lambda\super{V}_{r,\fw}(\rho)\big( f(\rho+\tfrac1V\gamma_r)-f(\rho) \big) + \lambda\super{V}_{r,\bw}(\rho)\big( f(\rho-\tfrac1V\gamma_r)-f(\rho) \big).
\label{eq:RRE1 generator}
\end{equation}

The propensities that are usually used in the so-called chemical master equation are derived from combinatoric considerations, and yield the mass-action kinetics in the limit~\cite{Kurtz1972,AndersonKurtz2011,MielkePattersonPeletierRenger2017}:
\begin{align}
  \tfrac1V \lambda\super{V}_{r,\fw}(\rho):=\frac1V\cdot\frac{\kappa_{r,\fw}}{V^{\alpha_{r,\tot}-1}} \frac{(\rho V)!}{(\rho V-\alpha_r)!} \xrightarrow{V\to\infty}\kappa_{r,\fw} \rho^{\alpha_r}=k_{r,\fw}(\rho),\notag\\
  \tfrac1V \lambda\super{V}_{r,\bw}(\rho):=\frac1V\cdot\frac{\kappa_{r,\bw}}{V^{\beta_{r,\tot}-1}} \frac{(\rho V)!}{(\rho V-\beta_r)!} \xrightarrow{V\to\infty}\kappa_{r,\bw} \rho^{\beta_r}=k_{r,\bw}(\rho),
\label{eq:RRE1 convergent reaction rates}
\end{align}
using the notation $\alpha_{r,\tot}:=\sum_{y\in\Y}\alpha_{r,y}$ and $\alpha_r!:=\prod_{y\in\Y}\alpha_{r,y}!$.

\subsection{Equilibrium: limit, large deviations and free energy}

We will from now on (throughout this section) assume that the reaction network is of mass-action kinetics~\eqref{eq:RRE1 mass-action kinetics}, and chemically detailed balanced, i.e. there exists a $\rho^*\in\RR_+^\Y$ for which 
\begin{equation}
  \kappa_{r,\fw}{\rho^*}^{\alpha_r}=\kappa_{r,\bw}{\rho^*}^{\beta_r} \qquad\text{for all } r\in\R.
\label{eq:RRE1 chem detailed balance}
\end{equation} 
Naturally, $\rho^*$ is an equilibrium under the deterministic evolution~\eqref{eq:RRE1} \footnote{It should be stressed that, given an initial concentration $\rho_0$, both the deterministic evolution and the stochastic model is confined to the `stoichiometric compatibility class' $\rho_0+\Ran\Gamma=\{\rho_0+\Gamma w: w\in\RR_+^\R\}$. Therefore, it is not clear whether this equilibrium $\rho^*$ lies within the compatibility class that corresponds to the initial concentration. However, if there exists a detailed balanced concentration $\rho^*$, then there exists a unique detailed balanced concentration within each such class, see~\cite{AndersonKurtz2011} and the references therein. Without loss of generality, we can therefore implicitly assume that the detailed balanced equilibrium is unique, and lies within the correct compatibility class.}. Under this assumption, the invariant distribution of the stochastic model is known to be \cite{AndersonKurtz2011}:
\begin{equation*}
  \hat P_\infty\super{V}(\rho) = \prod_{y\in\Y} \frac{(V\rho_y^*)^{V\rho_y}}{(V\rho_y)!}\e^{-V\rho_y^*}.
\end{equation*}
Letting $V\to\infty$, this invariant distribution concentrates on the equilibrium state:
\begin{equation*}
  \hat P_\infty\super{\infty}(\rho)=
    \begin{cases}
      1, & \rho=\rho^*,\\
      0, & \text{otherwise}.
    \end{cases}
\end{equation*}
One can then extract the free energy by considering the corresponding large deviations, i.e. the exponential rate which with $\hat P_\infty\super{V}$ converges to zero. Indeed, by Stirling's formula,
\begin{equation}
  -\tfrac1V \log \hat P_\infty\super{V}(\rho)\xrightarrow{V\to\infty} \sum_{y\in\Y} \rho_y\log\mfrac{\rho_y}{\rho_y^*} - \rho_y + \rho_y^* =: h(\rho|\rho^*).
\label{eq:RRE1 Stirling}
\end{equation}
Such limit is known as a large-deviation principle; the function on the right characterises the stochastic cost of microscopic fluctuations. If the reaction rates are related to an internal energy via Arrhenius' law, than the expression $h(\rho|\rho^*)$ is really the Helmholtz free energy, apart from a normalisation term and a constant scaling, as explained in more detail in~\cite[Sec.~2.2\&2.3]{MielkePattersonPeletierRenger2017}.

\subsection{Dynamics: limit and large deviations}
\label{subsec:RRE1 dynamics limit ldp}

Observe that in the microscopic model, the process speeds up as $V$ increases with order $V$ while the jump sizes are of size $1/V$. Therefore by \eqref{eq:RRE1 convergent reaction rates}, as $V\to\infty$ the generator $\hat \Q\super{V}$ converges to the limit generator
\begin{equation*}
  (\hat \Q\super{\infty}f)(\rho) := \sum_{r\in\R} \bk_r(\rho)\grad f(\rho)\cdot\gamma_r.
\end{equation*}
Since this generator depends on the test function $f$ through $\grad f(\rho)$ only, we can make the ansatz that the limit process is deterministic $\hat P\super{\infty}_t(\tilde \rho)=\delta_{\rho_t}(\tilde \rho)$, for some curve $\rho_{(\cdot)}$. Plugging this into the definition of the generator yields:
\begin{multline*}
  \grad f(\rho_t)\cdot\dot\rho_t=\partial_t f(\rho_t)=\partial_t \langle\hat P\super{\infty}_t,f\rangle = \langle\hat P_t\super{\infty},\Q\super{\infty} f\rangle \\
 = (\hat \Q\super{\infty} f)(\rho_t) = \grad f(\rho_t)\cdot\sum_{r\in\R} \bk_r(\rho_t)\gamma_r.
\end{multline*}
As this relation holds for any test function~$f$, we see that the ansatz was justified if the postulated curve $\rho_t$ satisfies the Reaction Rate Equation~\eqref{eq:RRE1}. Hence the stochastic process $\rho\super{V}_t$ converges (pathwise in probability) to the deterministic solution $\rho_t$ of the Reaction Rate Equation.

Similar to the calculation of the fluctuations of the equilibrium~\eqref{eq:RRE1 Stirling}, we can study the large deviations of the path probabilities $\hat P\super{V}$; this is known as a dynamic large-deviation principle. These dynamical fluctuations can be formally calculated with the framework of~\cite{Feng2006}. To this aim we study the non-linear generator:
\begin{align*}
  (\hat \H\super{V}f)(\rho)
    &:=\tfrac1V \e^{-Vf(\rho)}(\hat \Q\super{V}\e^{Vf})(\rho) \\
    &=\sum_{r\in\R} \tfrac1V \lambda\super{V}_{r,\fw}(\rho)\big(\e^{Vf(\rho+\tfrac1V\gamma_r)-Vf(\rho)}-1\big) \\
    &\hspace{0.5cm} + \tfrac1V \lambda\super{V}_{r,\bw}(\rho)\big(\e^{Vf(\rho-\tfrac1V\gamma_r)-Vf(\rho)}-1\big) \\
    &\xrightarrow{V\to\infty} \sum_{r\in\R} k_{r,\fw}(\rho) \big(\e^{\grad f(\rho)\cdot\gamma_r}-1\big) + k_{r,\bw}(\rho) \big(\e^{-\grad f(\rho)\cdot\gamma_r}-1\big).
\end{align*}
As before, the limit depends on the test function through $\grad f(\rho)$ only, which is consistent with the fact that the limit is deterministic. We then define, by a slight abuse of notation,
\begin{align}
  \hat\H(\rho,\xi)&:=\sum_{r\in\R}k_{r,\fw}(\rho) \big(\e^{\xi\cdot\gamma_r}-1\big) + k_{r,\bw}(\rho) \big(\e^{-\xi\cdot\gamma_r}-1\big),
  &\text{and}
\label{eq:RRE1 H}\\
  \hat\L(\rho,s) &:= \sup_{\xi\in\RR^\Y} \xi\cdot s - \hat\H(\rho,\xi).
\label{eq:RRE1 L}
\end{align} 
The dynamic large-deviation principle now states that
\footnote{The rigorous definition of the large-deviation principle, the heuristics behind this method, and the rigorous proof of this statement is all beyond the scope of this paper. For the precise details we refer to \cite{Feng2006}, and for the rigorous proof for this particular system (by more classical methods) to~\cite{DupuisRamananWu2016,AgazziDemboEckmann2017a,PattersonRenger2018flux}. For the sake of brevity, we assume that the randomness in the initial condition is sufficiently small (e.g. deterministic) so that we do not obtain initial fluctuations.}
\begin{equation}
  \Prob\super{V}\!\big(\rho\super{V}_{(\cdot)}\approx \rho_{(\cdot)}\big) \stackrel{V\to\infty}{\sim} \e^{-V\int_0^T\!\hat \L(\rho_t,\dot\rho_t)\,dt}.
\label{eq:RRE1 dynamic ldp}
\end{equation}

\begin{remark} The function~\eqref{eq:RRE1 L} is implicitly defined as a supremum; although the supremum can be calculated explicitly, this leads to very cumbersome expressions. However, it does have a dual formulation in terms of a minimisation problem:
\begin{equation*}
  \hat\L(\rho,s) = \inf_{\substack{j_\fw,j_\bw\in\RR^\R_+:\\\Gamma (j_\fw-j_\bw)=s}}\, h\big(j_\fw | k_\fw(\rho)\big) + h\big(j_\bw | k_\bw(\rho)\big),
\end{equation*}
where $h(j| k):=\sum_{r\in\R} j_r\log(j_r/k_r)-j_r+k_r$, similar to \eqref{eq:RRE1 Stirling}. Although the relative entropy $h$ appears in both expressions, they should not be confused: $h(\rho|\rho^*)$ is an equilibrium rate whereas $h(j|k(\rho))$ is a dynamic quantity. We shall see later on that the latter can be directly (without the infimum) be interpreted as a large-deviation rate, where the variable $j$ is a reaction flux.
\label{rem:RRE1 entropic form}
\end{remark}

\subsection{GGS, energy balance, and relation with fluctuations}
\label{subsec:RRE1 GGS}

The (naive) aim is to rewrite the macroscopic equation as a gradient flow of some free energy $\F$:
\begin{equation}
  \dot\rho_t=-\hat K(\rho_t)\grad\hat \F(\rho_t)=:-\Grad_{\rho_t}\hat \F(\rho_t),
\label{eq:linear gradient flow}
\end{equation}
where $\hat K(\rho)$ is some linear symmetric, positive definite (linear response) operator that maps thermodynamic forces to velocities. Mathematically, this operator can be interpreted as the inverse of the metric tensor of some manifold, so that the right-hand side is the gradient on this manifold.

Clearly, \eqref{eq:linear gradient flow} is equivalent to requiring
\begin{equation*}
  0 = \tfrac12\lVert \dot\rho_t + \hat K(\rho_t)\grad\hat \F(\rho_t)\rVert^2_{\hat K(\rho_t)^{-1}} =  \tfrac12\lVert \dot\rho_t\rVert^2_{\hat K(\rho_t)^{-1}} + \tfrac12\lVert \grad\hat \F(\rho_t)\rVert^2_{\hat K(\rho_t)} + \grad\hat \F(\rho_t)\cdot\dot\rho_t,
\end{equation*}
if we set $\lVert \xi \rVert_{\hat K(\rho)}^2:=\langle \xi,\hat K(\rho)\xi\rangle$ and $\lVert s\rVert_{\hat K(\rho)^{-1}}:=\langle s,\hat K(\rho)^{-1} s \rangle$. Integrated over a time interval $(0,T)$, this reads
\begin{equation*}
  0 = \int_0^T\!\Big(\tfrac12\lVert \dot\rho_t\rVert^2_{\hat K(\rho_t)^{-1}} + \tfrac12\lVert \grad\hat \F(\rho_t)\rVert^2_{\hat K(\rho_t)}\Big)\,dt + \hat \F(\rho_T)-\hat \F(\rho_0).
\end{equation*}
The last two terms describe the free energy loss (or entropy production), and the first two terms describe the dissipation\footnote{For a linear gradient flow \eqref{eq:linear gradient flow},  $\tfrac12\lVert \dot\rho_t\rVert^2_{\hat K(\rho_t)^{-1}} + \tfrac12\lVert \grad\hat \F(\rho_t)\rVert^2_{\hat K(\rho_t)}=\lVert \dot\rho_t\rVert^2_{\hat K(\rho)^{-1}}$; hence the dissipation can be seen as a kinetic energy. This is however no longer true for general GGSs.}; as such this equation represents a free energy balance. 

Observe that this expression is always non-negative, and $0$ exactly on the gradient flow~\eqref{eq:linear gradient flow}. Moreover, we see that this expression has the same dimension as $\hat \F$, the free energy; it is indeed the free energy cost to deviate from the macroscopic dynamics. This interpretation shows that this cost should be equal to the cost $\int_0^T\!\hat \L(\rho_t,\dot\rho_t)\,dt$ of microscopic fluctuations~\eqref{eq:RRE1 dynamic ldp}.

This is in many cases, and particularly in this case of chemical reactions, impossible. Since the large-deviation function \eqref{eq:RRE1 L} is non-quadratic, one should allow for non-quadratic dissipation terms. We therefore replace the two squared norms by a pair of dual~\emph{dissipation potentials}:
\begin{align}
  0 = \int_0^T\!\Big(\hat \Psi(\rho_t,\dot\rho_t) + \hat \Psi^*\big(\rho_t,-\grad\hat \F(\rho_t)\big)\Big)\,dt + \hat \F(\rho_T)-\hat \F(\rho_0),
\label{eq:nonlinear EDI}
\end{align}
where, as in the quadratic case, the potentials are convex duals of each other, i.e. $\hat \Psi(\rho,s)=\sup_{\xi} \xi\cdot s - \hat \Psi^*(\rho,\xi)$. Moreover, we assume that $\hat \Psi$ and $\hat \Psi^*$ are both non-negative; from~\eqref{eq:nonlinear EDI} we then see that the free energy $\F$ is non-increasing on the flow.

Due to the convex duality, the right-hand side of \eqref{eq:nonlinear EDI} is again always non-negative, and $0$ exactly when
\begin{equation*}
  \dot\rho_t = \grad_\xi\hat \Psi^*\big(\rho_t,-\grad\hat \F(\rho_t)\big).
\end{equation*}
We call such equation a~\emph{generalised gradient flow}, and the underlying structure $(\RR^\Y,\hat \Psi,\hat \F)$ a \emph{generalised gradient structure (GGS)}. Note that the generalisation with respect to \eqref{eq:linear gradient flow} lies in the fact that we allow for a non-linear relation between forces and velocities. We moreover say that a gradient structure is \emph{induced} by a cost function $\hat \L$ whenever $\hat \L(\rho,s) = \hat \Psi(\rho,s)+\hat \Psi^*\big(\rho,-\grad\hat \F(\rho)\big) + \grad\hat \F(\rho)\cdot s$.

In Section~\ref{sec:general theory} we will recall the relation between fluctuation costs and GGSs, as described in \cite{MielkePeletierRenger2014}. Applied to the current setting of chemical reactions, we reiterate the following result from~\cite{MielkePattersonPeletierRenger2017}. If we again assume mass-action kinetics~\eqref{eq:RRE1 mass-action kinetics} and chemical detailed balance~\eqref{eq:RRE1 chem detailed balance}, then there exists a unique GGS $(\RR^\Y,\hat \Psi,\hat \F)$ induced by the large-deviation cost $\hat \L$ from~\eqref{eq:RRE1 L}, where
\begin{align}
  \hat \F(\rho):=\tfrac12h(\rho | \rho^*),
  &&\text{and}&&
  \hat \Psi^*(\rho,\xi):=\sum_{r\in\R}\hat\sigma_r(\rho)\big(\cosh(\xi\cdot\gamma_r)-1\big),
\label{eq:RRE1 GGS}
\end{align}
with $\hat\sigma_r(\rho):=2\sqrt{k_{r,\fw}(\rho)k_{r,\bw}(\rho)}$. Since $\hat \Psi^*$ appears as a sum, the expression for $\Psi$ becomes a so-called inf-convolution where all reactions are strongly intertwined. For more details on these inf-convolutions and the factor $1/2$ in front of the free energy, we again refer to \cite[Sec.~3.4]{MielkePattersonPeletierRenger2017}.

\section{Leading example 2: fast-slow reaction fluxes}
\label{sec:reaction fluxes}

As mentioned in the introduction, the motivation behind the current paper is to search for thermodynamically consistent structures for systems that are not detailed balanced. The idea is that we increase the space by taking fluxes into account. However, in order to see the connection between structures in flux space and structures in state space, we dedicate this section to an example that is `almost' detailed balanced. More general systems will be dealt with when discussing the general theory in Section~\ref{sec:general theory}.

\subsection{Reacting particle system and macroscopic equation}

We consider a system of fast and slow chemical reactions, $\R=\R_\slow\cup\R_\fast$, where the slow reactions are assumed to be of mass-action kinetics and detailed balanced, as in the previous section. By contrast, we will not assume anything of the like for the fast reactions, neither shall we assume that each reaction consist of a forward and a backward reaction. However, in the microscopic model, we shall assume that the fast reactions happen on a faster time-scale, i.e.
\begin{equation}
  \mfrac{1}{V^2}\lambda\super{V}_r(\rho)\to \tilde k_r(\rho)\qquad \text{for } r\in\R_\fast,
\label{eq:RRE2 fast reaction scaling}
\end{equation}
but their effect on the concentrations is smaller, i.e. $\rho\super{V}(t)=\rho\super{V}(t^-) + \tfrac1{V^2}\gamma_r$ whenever a reaction $r\in\R_\fast$ occurs at time $t$. To notationally distinguish the two time scales we use a tilde ($\,\tilde{}\,$) to denote one-way fast quantities. Let us briefly mention that the additional fast reactions are not essential to see the relation between structures in flux and state space; they just lead to a richer example that is interesting in its own right.

The main idea is now to increase the state space by bookkeeping the events in the microscopic system, in this case, by counting the number of reactions that have occurred up to a given time. With the right scaling, this defines the following \emph{integrated reaction fluxes}\footnote{The term ``integrated'' signifies that these fluxes are cumulative over time. This simplifies the microscopic analysis since the corresponding process is Markovian; on a macroscopic scale only the time-derivatives will play a role. We also mention that we consider net rather than one-way slow fluxes, else the slow dynamics would not induce any force or gradient structure, see \cite[Sec.~4.6]{Renger2017}.}:
\begin{align*}
  &\bW\super{V}_{t,r} := \mfrac1V\#\big\{\text{forward reactions } r \text{ occurred in } (0,t)\big\} \\
  &\hspace{2cm} - \mfrac1V\#\big\{\text{backward reactions } r \text{ occurred in } (0,t)\big\},      &r\in\R_\slow,\\
  &\tilde W\super{V}_{t,r} := \mfrac1{V^2}\#\big\{\text{reactions } r \text{ occurred in } (0,t)\big\},      &r\in\R_\fast,
\end{align*}
and to shorten notation we sometimes write
\begin{equation}
  W\super{V}_t:=\big((\bW\super{V}_{t,r})_{r\in\R_\slow},(\tilde W\super{V}_{t,r})_{r\in\R_\fast}\big).
\label{eq:RRE2 flux vector notation}
\end{equation}
We shall always assume that the initial condition is known (deterministically) a priori, so that the concentrations can be retrieved from the integrated fluxes via the \emph{continuity equation}:
\begin{equation*}
  \rho\super{V}_t = \phi\super{V}\lbrack W\super{V}_t\rbrack := \rho\super{V}_0 + \Gamma W\super{V}_t.
\end{equation*}
In this sense the integrated fluxes encode more information than the concentrations.

The integrated fluxes are again a Markov process, with generator (cf.~\eqref{eq:RRE1 generator}):
\begin{align*}
  (\Q\super{V} f)(w) := &\sum_{r\in\R_\slow} 
     \lambda\super{V}_{r,\fw}\big(\phi\super{V}\lbrack w\rbrack\big)\big(  f(w+\tfrac1V\mathds1_r)- f(w) \big) \notag\\
   &\qquad+ \lambda\super{V}_{r,\bw}\big(\phi\super{V}\lbrack w\rbrack\big)\big(  f(w-\tfrac1V\mathds1_r)- f(w) \big) \notag\\
    &+ \sum_{r\in\R_\fast}
     \lambda\super{V}_r\big(\phi\super{V}\lbrack w\rbrack\big)\big(  f(w+\tfrac1{V^2}\mathds1_r)- f(w) \big),
%\label{eq:RRE2 generator}
\end{align*}

\subsection{Limit and large deviations}

We now mimic the arguments of Subsection~\ref{subsec:RRE1 dynamics limit ldp}, but now in the space of fluxes. Let $\rho\super{V}_0\to\rho_0$, and so $\phi\super{V}\lbrack w\rbrack\to\phi\lbrack w\rbrack:=\rho_0+\Gamma w$. Then by the same argument as in Subsection~\ref{subsec:RRE1 dynamics limit ldp}, using the scalings~\eqref{eq:RRE1 convergent reaction rates} and \eqref{eq:RRE2 fast reaction scaling}, one finds that as $V\to\infty$, the random process $W\super{V}_t$ converges (pathwise in probability) to the solution of the macroscopic equations:
\begin{align}
  \dot{\bar{w}}_{t,r} = \bk_r\big(\phi\lbrack w_t\rbrack\big), \quad r\in\R_\slow,
  &&\text{and}&&
  \dot{\tilde w}_{t,r} = \tilde k_r\big(\phi\lbrack w_t\rbrack\big), \quad r\in\R_\fast,
\label{eq:RRE2 limit eq}
\end{align}
again using the notation~\eqref{eq:RRE2 flux vector notation}. Indeed, combining these equations leads to the macroscopic equation~$\dot\rho_t=\Gamma k(\rho_t)$ for the concentrations.

Similarly, we study the fluctuations through the non-linear generator:
\begin{align}
  (\H\super{V} f)(w)
    &:=\tfrac1V \e^{-V f(w)}(\Q\super{V}\e^{V f})(w) \notag \\
    &=\sum_{r\in\R_\slow} \tfrac1V \lambda\super{V}_{r,\fw}\big(\phi\super{V}\lbrack w\rbrack\big)\big(\e^{V f(w+\tfrac1V\mathds1_r)-V f(w)}-1\big) \notag\\[-0.4cm]
    &\hspace{3cm} + \tfrac1V \lambda\super{V}_{r,\bw}\big(\phi\super{V}\lbrack w\rbrack\big)\big( \e^{V f(w-\tfrac1V\mathds1_r)-V f(w)}-1\big) \notag\\
    &\quad + \sum_{r\in\R_\fast} \tfrac1V \lambda\super{V}_r\big(\phi\super{V}\lbrack w\rbrack\big) \big( \e^{V f(w+\tfrac1{V^2}\mathds1_r)-V f(w)}-1\big) \notag\\
    &\hspace{-0.7cm}\xrightarrow{V\to\infty} \sum_{r\in\R_\slow} k_{r,\fw}\big(\phi\lbrack w\rbrack)\big) \big( \e^{\partial_{w_r} f(w)}-1\big) + k_{r,\bw}\big(\phi\lbrack w\rbrack\big) \big( \e^{-\partial_{w_r} f(w)}-1\big) \notag\\
    &\hspace{5cm}+ \sum_{r\in\R_\fast} \tilde k_r\big(\phi\lbrack w\rbrack\big) \partial_{w_r} f(w).
\label{eq:RRE2 nonlinear semigroup}
\end{align}
As in Section~\ref{subsec:RRE1 dynamics limit ldp}, this limit depends on the gradient $\grad f(w)$ only, which is consistent with the deterministic limit~\eqref{eq:RRE2 limit eq}. Again we set, by a slight abuse of notation,
\begin{align}
  \H(w,\zeta) &:= \sum_{r\in\R_\slow} k_{r,\fw}\big(\phi\lbrack w\rbrack\big) \big( \e^{\bar \zeta_r}-1\big) + k_{r,\bw}\big(\phi\lbrack w\rbrack\big) \big( \e^{-\bar\zeta_r}-1\big) \notag\\[-0.2cm]
      &\hspace{6cm} + \sum_{r\in\R_\fast} \tilde k_r\big(\phi\lbrack w\rbrack\big) \tilde \zeta_r,
\label{eq:RRE2 H}\\
  \L(w,j) &:= \sup_{\zeta\in\RR^\R} \zeta\cdot j - \H(w,\zeta) \notag\\
  &= \inf_{j_\fw-j_\bw=\bar\jmath} h\big(j_\fw | k_\fw(\phi\lbrack w\rbrack)\big) + h\big(j_\bw | k_\bw(\phi\lbrack w\rbrack)\big) + \chi_{\{\tilde\jmath = \tilde k(\phi\lbrack w\rbrack)\}},
\notag%\label{eq:RRE2 L}
\end{align}
using the notation $\chi_{\{\tilde\jmath = \tilde k(\phi\lbrack w\rbrack)\}}=0$ if $\tilde\jmath = \tilde k(\phi\lbrack w\rbrack)$ and $\infty$ otherwise.
Then, the dynamic large-deviation principle for the integrated fluxes $W\super{V}(t)$ state that (see \cite{PattersonRenger2018flux} for a rigorous proof):
\begin{equation}
  \Prob\!\big(W\super{V}_{(\cdot)}\approx w_{(\cdot)}\big) \stackrel{V\to\infty}{\sim}  \e^{-V\int_0^T\! \L(w_t,\dot w_t)\,dt}.
\label{eq:RRE2 dynamic ldp}
\end{equation}

Comparing this dynamic large-deviations principle with \eqref{eq:RRE1 dynamic ldp}, we see -- not coincidentally -- strong similarities. Let us assume that the limit fast fluxes do not influence the concentration, i.e.
\begin{equation}
  \Gamma\tilde k\big(\phi\lbrack w\rbrack\big)=0 \qquad\text{for all } w.
\label{eq:RRE2 decoupling}
\end{equation} 
Naturally, this macroscopic condition entails a sort of decoupling between the slow and fast dynamics, e.g. when the species involved in the slow dynamics act as a catalyst for the fast dynamics. In that case $\rho\super{V}_t=\phi\super{V}\lbrack W\super{V}_t\rbrack$ is exactly the process with generator~\eqref{eq:RRE1 generator}. The contraction principle of large deviations theory~\cite[Th.~4.2.1]{Dembo1998} then states that the two large-deviation costs are related via:
\begin{equation*}
  \int_0^T\!\hat\L(\rho_t,\dot\rho_t)\,dt = \inf_{w_{(\cdot)}:\rho_t=\phi\lbrack w_t\rbrack}\, \int_0^T\!\L(w_t,\dot w_t)\,dt.
\end{equation*}
In this setting, this infimum is only with respect to the second variable, because $\L(w,j)$ depends on $w$ through $\phi\lbrack w\rbrack$ only, which in turn arises naturally from the fact that the jump rates depend on the state and not on the integrated flux. Therefore the relation above simplifies further to:
\begin{equation}
  \hat\L\big(\phi\lbrack w\rbrack,s\big) = \inf_{j:s=\Gamma j} \L(w,j) \qquad\text{for all } w,s,
\label{eq:RRE2 hat L L}
\end{equation}
which is consistent with Remark~\ref{rem:RRE1 entropic form}. These relations will be the starting point of the general theory that we develop in Section~\ref{sec:general theory}. Let us only mention here that as a consequence we also have the relation $\hat\H\big(\phi\lbrack w\rbrack,\xi\big) = \H(w,\Gamma\tp\xi)$ for all $w,\xi$, cf.~\eqref{eq:RRE1 H} and \eqref{eq:RRE2 H}.

\subsection{Induced Generic structure}
\label{subsec:RRE2 induced ggen}

We now investigate whether the large deviations~\eqref{eq:RRE2 dynamic ldp} induces some structure in the space of fluxes. It turns out that this is indeed the case. As was found in \cite[Cor.~4.8]{Renger2017}, under the detailed balance assumption~\eqref{eq:RRE1 chem detailed balance} the slow fluxes induce the GGS $(\RR^{\R_\slow}_+,\Psi,\F)$, where
\begin{align}
  \F(w)&:=\hat\F\big(\phi\lbrack w\rbrack\big) = \tfrac12 h(\rho_0+\Gamma w|\rho^*), \label{eq:RRE2 F}\\
  \Psi^*(w,\bar \zeta)&:=\sum_{r\in\R_\slow}\sigma_r(w)\big(\cosh(\bar \zeta_r)-1\big)\qquad\text{and}\label{eq:RRE2 Psis}\\
  \Psi(w,\bar\jmath)&:=\sum_{r\in\R_\slow} \sigma_r(w) \Big( \cosh^*\!\big(\tfrac{\bar \jmath_r}{\sigma_r(w)}\big) +1\Big),
\label{eq:RRE2 Psi}
\end{align}
with $\sigma_r(w):=2\sqrt{k_{r,\fw}\big(\phi\lbrack w\rbrack\big)k_{r,\bw}\big(\phi\lbrack w\rbrack\big)}, r\in\R_\slow$. Extending these dissipation potentials to the full flux space by setting $\Psi^*(w,\zeta):=\Psi^*(w,\bar\zeta)$ and $\Psi(w,j):=\Psi(w,\bar j) + \chi_{\{\tilde j=0\}}$, we can decompose the large-deviation cost function as
\begin{equation*}
  \L(w,j) = \Psi\big(w,j-\tilde k(\phi\lbrack w\rbrack)\big) + \Psi^*\big(w,-\grad\F(w)\big) + \grad\F(w)\cdot\big(j-\tilde k(\phi\lbrack w\rbrack)\big),
%\label{eq:RRE2 L pGGEN}
\end{equation*}
and accordingly, the macroscopic evolution~\eqref{eq:RRE2 limit eq} as:
\begin{equation*}
  \dot w_t = \grad_\zeta\Psi^*\big(w_t,-\grad\F(w_t)\big) + \tilde k(\phi\lbrack w\rbrack).
\end{equation*}
Due to \eqref{eq:RRE2 F} and the decoupling condition~\eqref{eq:RRE2 decoupling}, the fast fluxes are orthogonal to the driving force, i.e.
\begin{equation}
  \grad_w\F(w)\cdot\tilde k(\phi\lbrack w\rbrack) = \grad_\rho\hat\F\big(\phi\lbrack w\rbrack\big)\cdot \Gamma\tilde k(\phi\lbrack w\rbrack)=0.
\label{eq:RRE2 pGGEN NIC}
\end{equation}
The quadruple $(\RR^\R_+,\Psi,\F,\tilde k\circ \phi)$ satisfying this condition falls within the class of what is recently coined \emph{pre-(Generalised) Generic (pGGEN)} \cite{KLMP2017phys,KLMP2018math}.

It was shown in those works that non-interaction condition~\eqref{eq:RRE2 pGGEN NIC} is a necessary and sufficient condition for the existence of an underlying \emph{Generalised Generic (GGEN)} structure $(\RR^\R_+,\Psi,\F,L,\E)$\footnote{One often needs to introduce an auxiliary energy (e.g. a heat bath) to force conservation of energy, which enlarges the degrees of freedom in the system. To keep notation accessible we ignore this issue.}.
This means that the fast flux term is Hamiltonian $\tilde k(\phi\lbrack w\rbrack)=L(w)\grad\E(w)$ for some Poisson structure $L$ satisfying the Jacobi identity (see Section~\ref{sec:general theory}), $\E$ is some Hamiltonian energy, and the following two non-interaction conditions are satisfied:
\begin{align}
  L(w)\grad\F(w) = 0 &&\text{and}&& \Psi^*\big(w,\zeta+ z\grad\E(w)\big) = \Psi^*(w,\zeta)
\label{eq:RRE2 NIC}
\end{align}
for all $w\in\RR^\R_+,\zeta\in\RR^\R$ and $z\in\RR$. These two conditions guarantee that along solutions the free energy is non-increasing and the Hamiltonian energy is conserved:
\begin{align*}
  \tfrac{\d}{\d t}\F(w_t) &= \underbrace{\grad\F(w)\cdot \grad_\zeta\Psi^*\big(w_t,-\grad\F(w)\big)}_{\leq 0 \text{ by convexity}} + \underbrace{\grad\F(w)\cdot L(w)}_{=0 \text{ by } \eqref{eq:RRE2 NIC}}\grad\E(w) \leq 0,\\
  \tfrac{\d}{\d t}\E(w_t) &= \underbrace{\grad\E(w)\cdot \grad_\zeta\Psi^*\big(w_t,-\grad\F(w)\big)}_{=0 \text{ by } \eqref{eq:RRE2 NIC}} + \underbrace{\grad\E(w)\cdot L(w)\grad\E(w)}_{=0 \text{ by skewsymmetry}} =0.
\end{align*}

One main message of this section is that the flux large-deviation cost $\L$ induces a unique pGGEN system~$(\RR^\R_+,\Psi,\F,\tilde k\circ \phi)$. Although this implies the existence of a GGEN system induced by $\L$, one can not uniquely decide on the basis of $\L$ what the `correct' Hamiltonian structure $(\RR^\R_+,L,\E)$ should be. Additional physical or mathematical arguments needed to uniquely fix the Hamiltonian structure are beyond the scope of this paper; for possible constructions of Poisson structures $L$ and energies $\E$, we refer the reader to \cite[Sec.~4]{KLMP2018math}).

It should be noted that the pGGEN system~$(\RR^\R_+,\Psi,\F,\tilde k\circ \phi)$ is rather special in that \eqref{eq:RRE2 pGGEN NIC} holds for any $\hat\F$, since the drift $\tilde k\circ \phi$ is `divergence-free'. We will see that such systems play a special role in connecting systems in flux and state space.  

Another main message is that the GGS part~\eqref{eq:RRE2 F} on flux space is strongly related to the GGS~\eqref{eq:RRE1 GGS} on state space, just like both cost functions are related by \eqref{eq:RRE2 hat L L}. These observations will be the basis of the general theory of the next section.

\section{General theory}
\label{sec:general theory}

Following the examples of Sections~\ref{sec:chemical reactions} and \ref{sec:reaction fluxes}, we now develop a more abstract framework to study to the relation between energy-driven structures in flux and in state space.

\subsection{Geometry and notation}

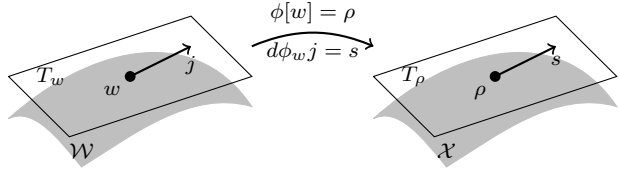
\begin{figure}[h!]
\centering
\begin{tikzpicture}[scale=0.8,font=\footnotesize]
\filldraw[color=lightgray] (0,-0.7)..controls (1.2,0.4) and (3,1)..
                           (4,-0.5)..controls (3,-0.2) and (2,-1) ..
                           (1.2,-1.5) node[anchor=south]{\color{black}$\W$}.. controls (0.8,-1) and (0.4,-0.5) .. (0,-0.7);
\draw(0,0) node[right=7]{$T_w$}--(3,1)--(4,0)--(1,-1)--(0,0);
\filldraw(2,0) node[anchor=north east]{$w$} circle(0.08);
\draw[thick,->](2,0)--(3,0.5) node[anchor=north]{$j$};

\filldraw[color=lightgray] (6,-0.7)..controls (7.2,0.4) and (9,1)..
                           (10,-0.5)..controls (9,-0.2) and (8,-1) ..
                           (7.2,-1.5) node[anchor=south]{\color{black}$\X$}.. controls (6.8,-1) and (6.4,-0.5) .. (6,-0.7);
\draw(6,0) node[right=7]{$T_\rho$}--(9,1)--(10,0)--(7,-1)--(6,0);
\filldraw(8,0) node[anchor=north east]{$\rho$} circle(0.08);
\draw[thick,->](8,0)--(9,0.5) node[anchor=north]{$s$};

\draw[thick,->](4,0.5)..controls (4.7,0.8) and (5.3,0.8) .. node[midway,anchor=south]{$\phi\lbrack w\rbrack=\rho$} node[midway,anchor=north]{$d\phi_w j=s$}(6,0.5);
\end{tikzpicture}
\caption{The `continuity' mapping between the flux and state manifolds.}
\label{fig:manifolds}
\end{figure}

Throughout this section we assume to be given:

\begin{itemize}
\item A differentiable manifold $\W$ (``flux space''), where tangents are denoted by $(w,j)\in T\W$ and cotangents by $(w,\zeta)\in T^*W$. Note that we distinguish between tangents and cotangents; we write the dual pairing between them as ${}_{T_w^*}\!\langle\zeta,j\rangle_{T_w}$ or simply $\langle\zeta,j\rangle$; 
\item A differentiable manifold $\X$ (``state space''), where tangents are denoted by $(\rho,s)\in T\X$ and cotangents by $(\rho,\xi)\in T^*\X$;
\item A surjective differentiable operator $\phi:\W\to\X$ with bounded linear differential $d\phi_w :T_w\to T_{\Gamma(w)}$ and adjoint differential $d\phi_w \tp: T_{\phi\lbrack w\rbrack}^*\to T_w^*$. This defines an abstract continuity equation $\phi\lbrack w\rbrack =\rho$, or in differentiated form $d\phi_w j = s$, see Figure~\ref{fig:manifolds}. Contrary to Sections~\ref{sec:chemical reactions} and \ref{sec:reaction fluxes}, the second continuity equation may now also depend on $w$. In practice, the continuity mapping $\phi\lbrack w\rbrack$ depends on the initial state $\rho_0$, which we assume to be fixed.
\item An L-function $\L:T\W\to\RR_+$ on flux space (see below for the definition of L-functions). This function could be a dynamic large-deviation cost function corresponding to random fluxes in some microscopic system, but throughout this section it could also be a more general expression.
\item An L-function $\hat\L:T\X\to\RR_+$ on state space, related to the flux space L-function via
\begin{equation}
  \hat\L(\rho,s):=\inf_{\substack{(w,j)\in T\W:\\\phi\lbrack w\rbrack = \rho, \,d\phi\lbrack w\rbrack j=s}} \L(w,j).
\label{eq:gent hatL=inf L}
\end{equation}
This relation is again inspired by large-deviation theory, where such relation holds due to the contraction principle~\cite[Th.~4.2.1]{Dembo1998}.
% Usually, as in the previous sections, we shall assume that the flux Lagrangian depends on $w$ through $\rho=\phi\lbrack w\rbrack$ only.
\item Corresponding to the L-functions are their convex duals with respect to their second variable, i.e. $\H:T^*\W\to\RR$ and $\hat\H:T^*\X\to\RR$ with
\begin{align*}
  \H(w,\zeta):=\sup_{j\in T_w} \langle \zeta,j\rangle - \L(w,j)
  &&\text{and}&&
  \hat\H(\rho,\xi):=\sup_{s\in T_\rho} \langle \xi,s\rangle - \L(\rho,\xi).
\end{align*}
We express assumptions in terms of these duals, since in practice they are often more explicitly given than their corresponding L-functions.
\end{itemize}

Recall that in the previous section we saw that $\L(w,j)$ depends on $w$ through $\rho=\phi\lbrack w\rbrack$ only. This condition becomes slightly more complicated in non-flat spaces, for a number of reasons. Firstly, the flux $j$ can not be kept fixed while changing $w$, unless one has a path-independent notion of parallel transport, i.e. the space is flat. Secondly, even if $\L(w,j)$ would not depend on $w$, this dependence could re-enter through the continuity equation in the infimum $\inf_{j:d\phi_w j=s}\L(w,j)$. Therefore, the condition that we need is that
$  \hat\L(\phi\lbrack w\rbrack,s)=\inf_{\substack{j\in T_w:\\d\phi_w j=s}} \L(w,j)$
for all $w\in\W, s\in T_{\phi\lbrack w\rbrack}$. It is easily seen that this condition is equivalent to the following flux invariance:
\begin{quoting}[rightmargin=\parindent]
for fixed $(\rho,\xi)\in T^*\X$, the function $\phi^{-1}\lbrack\rho\rbrack\ni w\mapsto\H(w,d\phi_w\tp\xi)$ does not depend on $w$.
\eqnum
\label{eq:gent invariance}
\end{quoting}

All manifolds and functionals are assumed to be sufficiently differentiable wherever needed. For a (differentiable) functional $F:\W\to\RR$ (and similarly on flux space) we write $dF(w)\in T_w^*$ for the derivative, in the sense that on a curve $\tfrac{d}{dt} F(w_t)=\langle dF(w_t),\dot w_t\rangle$.

\subsection{Definitions}

We now define the notions of L-functions, dissipation potentials, GGS, pGGEN, Poisson operators and GGEN on flux space; the same concepts on state space are defined analogously. Naturally all notions are compatible with the exposition from Section~\ref{sec:chemical reactions} and \ref{sec:reaction fluxes}.

\begin{definition} We say $\L:T\W\to\RR_+$ is an L-function whenever for all $w\in\W$:
\begin{enumerate}[(i)]
  \item $j\mapsto\L(w,j)$ is convex;
  \item $\L(w,j)=0 \iff j=\A_\L(w)$ for some unique vector field $\A_\L$.
\end{enumerate}
\end{definition}
Due to the convexity, $\L$ is also the convex dual of $\H$. Central to GGS, pGGEN and GGEN is the notion of dissipation potentials:

\begin{definition} A function $\Psi:T\W\to\RR_+$ is called a \emph{dissipation potential} whenever for all $w\in\W$:
\begin{enumerate}[(i)]
\item $j\mapsto\Psi(w,j)$ is convex;
\item $\Psi(w,0)=0$;
\end{enumerate}
If these conditions hold, then the same conditions hold for the (pre-)dual dissipation potential
\begin{equation}
  \Psi^*(w,\zeta):=\sup_{j\in T_w}\langle\zeta,j\rangle - \Psi(w,j).
\label{eq:gent Psis from Psi}
\end{equation}
We also say that $(\Psi,\Psi^*)$ is a \emph{dissipation potential} pair whenever $\Psi$ is a dissipation potential.
\end{definition}

\begin{definition} A \emph{generalised gradient system} (GGS) is a triple $(\W,\Psi,\F)$, where $\W$ is a differentiable manifold, $\F:\W\to\RR$ and $\Psi:T\W\to\RR_+$ is a dissipation potential.
We say that an L-function $\L$ induces a GGS $(\W,\Psi,\F)$ whenever for all $(w,j)\in T\W$:
\begin{equation}
  \L(w,j)=\Psi(w,j)+\Psi^*\big(w,-d\F(w)\big) + \langle d\F(w),j\rangle.
\label{eq:gent GGS L=PsiPsi*}
\end{equation}
\end{definition}

As explained in Subsection~\ref{subsec:RRE2 induced ggen}, by extending a GGS with an orthogonal drift one arrives at
\begin{definition}[\cite{KLMP2018math}] A \emph{Generalised pre-Generic system} (pGGEN) is a quadruple $(\W,\Psi,\F,b)$, where $\W$ is a differentiable manifold, $\Psi$ is a dissipation potential, $\F:\W\to\RR$, and $b(w)\in T_w$ is a vector field such that:
\begin{equation*}
  \langle d\F(w), b(w)\rangle =0 \qquad \text{for all } w\in\W.
\end{equation*}
We say that an L-function induces a pGGEN $(\W,\Psi,\F,b)$ whenever for all $(w,j)\in T\W$:
\begin{equation*}
  \L(w,j)=\Psi\big(w,j-b(w)\big)+\Psi^*\big(w,-d\F(w)\big) + \langle d\F(w),j\rangle.
%\label{eq:gent GGEN L=PsiPsi*}
\end{equation*}
\end{definition}

Finally, if the drift has the form of an Hamiltonian system that behaves more or less independently of the GGS part we arrive at a Generalised Generic system. In order to define this we first define:
\begin{definition} A linear operator $L:T^*\W\to T\W$ is called a Poisson structure if it satisfies the Jacobi identity
  \begin{equation*}
    \big\{\{\F_1,\F_2\}_L,\F_3\big\}_L+\big\{\{\F_2,\F_3\}_L,\F_1\big\}_L+\big\{\{\F_3,\F_1\}_L,\F_2\big\}_L=0
  \end{equation*}
for all $\F_{1,2,3}:\W\to\RR$, where $\{\F_1,\F_2\}_L(w):=\langle d\F_1(w),L(w) d\F_2(w)\rangle$; %\label{it:Jacobi}.
\end{definition}
Jacobi's identity implies skew symmetry, i.e. $\langle\zeta_1,L(w)\zeta_2\rangle = - \langle\zeta_2,L(w)\zeta_1\rangle$ for all $w\in\W$, $\zeta_1,\zeta_2\in T_w$; in particular one has $\langle\zeta,L(w)\zeta\rangle=0$. Finally,
\begin{definition}[{\cite[Sect.~2.5]{Mielke2011GGEN}}] A \emph{generalised Generic system} (GGEN) is a quintuple $(\W,\Psi,\F,L,\E)$, where $\W$ is a differentiable manifold, $\Psi$ is a dissipation potential, $\E,\F:\W\to\RR$, $L:T^*\W\to T\W$ is a Poisson structure, and the two non-interaction conditions are satisfied:
\begin{align}
  &L(w)d\F(w)=0 \qquad\text{for all }w\in\W, \text{ and}
    \label{eq:gent NIC Psi*DE}\\
  &\Psi^*\big(w,\zeta+ \lambda d\E(w)\big) = \Psi^*(w,\zeta) \qquad\text{for all } (w,\zeta)\in T^*\W \text{ and } \lambda\in\RR.
    \label{eq:gent NIC LDF}
\end{align}
We say that an L-function induces a GGEN $(\W,\Psi,\F,L,\E)$ whenever for all $(w,j)\in T\W$:
\begin{equation}
  \L(w,j)=\Psi\big(w,j-L(w)d\E(w)\big)+\Psi^*\big(w,-d\F(w)\big) + \langle d\F(w),j\rangle.
\label{eq:GGEN L=PsiPsi*}
\end{equation}
\end{definition}

\begin{remark} If a GGS, pGGEN or GGEN is given on a manifold $\W$, and the dissipation potentials are quadratic (as explained in Section~\ref{subsec:RRE1 GGS}), then one can use the positive definite operator $K(w)\xi:=d_\zeta\Psi^*(w,\zeta)$ to define a new manifold, and redefine everything on this manifold. This allows to study the structures from a more geometric point of view.
\label{rem:new manifold}
\end{remark}

\subsection{From L-functions to GGS, pGGEN and GGEN}

We now recall some of the main results from \cite{MielkePeletierRenger2014} and \cite{KLMP2018math}, that give necessary and sufficient conditions for an L-function to induce a GGS or pGGEN. A similar result for GGEN does not exist, since an L-function does not uniquely determine a Poisson operator and Hamiltonian energy. However, from a pGGEN one can always construct a GGEN (in a non-unique way); for that result we refer the reader to \cite[Sec.~4]{KLMP2018math}.

Again, the following results are described, but not restricted to flux space.

\begin{theorem}[{\cite[Prop.~2.1 \& Th.~2.1]{MielkePeletierRenger2014}}] Let $\L:T\W\to\RR_+$ be an L-function with convex dual $\H$, and let $\F:\W\to\RR$ be given. Then the following statements are equivalent:
\begin{enumerate}[(i)]
\item $\L$ induces a GGS $(\W,\Psi,\F)$ for some dissipation potential $\Psi$,
\item $\H(w,\zeta) = \Psi^*\big(w,\zeta-d_w\F(w)\big) - \Psi^*\big(w,-d\F(w)\big)$ for some diss. pot. $\Psi^*$,
%  \eqnum\label{eq:gent MPR H in terms of Psis}
\item $d_\zeta\H\big(w,d_w\F(w)\big)=0$.
  \eqnum\label{eq:gent MPR condition on H}
\item $d_j\L(w,0)=d_w\F(w)$,
  \eqnum\label{eq:gent MPR condition on L}
\end{enumerate}
In that case $\Psi^*$ (and indirectly $\Psi$) is uniquely determined by
\begin{equation}
  \Psi^*(w,\zeta)=\H\big(w,\zeta+d_w\F(w)\big) - \H\big(w,d_w\F(w)\big).
\label{eq:MPR Psis from H}
\end{equation}
\label{th:gent MPR}
\end{theorem}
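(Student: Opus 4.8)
The plan is to reduce the whole statement to fibrewise convex duality: fixing $w$, the functions $\L(w,\cdot)$ and $\H(w,\cdot)$ form a Legendre-dual pair on the fibres $T_w$ and $T_w^*$ (the paper already records that $\L(w,\cdot)$ is the convex dual of $\H(w,\cdot)$), and every equivalence is then an elementary manipulation of the Fenchel transform combined with the single structural fact that a critical point of a convex function is a global minimiser. Throughout I will use that $\H(w,0)=-\inf_j\L(w,j)=0$, since $\L\ge 0$ vanishes at $\A_\L(w)$.

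I would first prove $(i)\Leftrightarrow(ii)$ by Fenchel-transforming the defining identity \eqref{eq:gent GGS L=PsiPsi*} in $j$. The constant $\Psi^*\big(w,-d\F(w)\big)$ passes through unchanged and the linear term $\langle d\F(w),j\rangle$ merely shifts the argument, so the dual of the right-hand side is exactly $\Psi^*\big(w,\zeta-d\F(w)\big)-\Psi^*\big(w,-d\F(w)\big)$; since $\L(w,\cdot)$ is the convex dual of $\H(w,\cdot)$, biconjugation is an involution and the two identities are equivalent with the \emph{same} $\Psi^*$.

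Next, $(ii)\Rightarrow(iii)$ is obtained by differentiating $(ii)$ in $\zeta$ at $\zeta=d\F(w)$: this gives $d_\zeta\H\big(w,d\F(w)\big)=d_\zeta\Psi^*(w,0)$, and $d_\zeta\Psi^*(w,0)=0$ because a dissipation potential is nonnegative with $\Psi^*(w,0)=0$, so $0$ minimises $\Psi^*(w,\cdot)$. For the converse $(iii)\Rightarrow(ii)$ I would simply \emph{define} $\Psi^*$ by \eqref{eq:MPR Psis from H} and verify it is a genuine dissipation potential: convexity is inherited from $\H(w,\cdot)$, the normalisation $\Psi^*(w,0)=0$ is immediate, and nonnegativity holds precisely because $(iii)$ says $d\F(w)$ is a critical point---hence, by convexity, a global minimiser---of $\H(w,\cdot)$. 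Substituting this $\Psi^*$ and using $\H(w,0)=0$ recovers $(ii)$ exactly; reading the same computation backwards shows that \eqref{eq:MPR Psis from H} is forced, giving uniqueness of $\Psi^*$ (and hence of its dual $\Psi$). Finally $(iii)\Leftrightarrow(iv)$ is the Fenchel reciprocity $\zeta=d_j\L(w,j)\iff j=d_\zeta\H(w,\zeta)$ applied to the single dual pair $(j,\zeta)=\big(0,d\F(w)\big)$.

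I expect the only genuine obstacle to be the nonnegativity step in $(iii)\Rightarrow(ii)$---namely upgrading $d\F(w)$ from a critical point to a global minimiser of $\H(w,\cdot)$---together with the mild regularity (lower semicontinuity and fibrewise differentiability) needed to make the Fenchel transform involutive and to identify subdifferentials with derivatives at the points $0$ and $d\F(w)$.
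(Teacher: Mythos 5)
Your proposal is correct: the paper itself does not prove this theorem (it is quoted from Mielke--Peletier--Renger 2014), and your argument is essentially the proof given in that reference --- fibrewise Legendre duality to pass between (i) and (ii), the observation that condition (iii) makes $d\F(w)$ a critical point and hence, by convexity, a global minimiser of $\H(w,\cdot)$ so that \eqref{eq:MPR Psis from H} defines a genuine (and, by the backwards substitution, unique) dissipation potential, and Fenchel reciprocity for (iii)$\Leftrightarrow$(iv). The regularity caveats you flag (lower semicontinuity for biconjugation, differentiability to identify subdifferentials with derivatives) are exactly the standing assumptions under which the cited result is stated, so nothing is missing.
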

From condition~\eqref{eq:gent MPR condition on L} we see that $\F$, if it exists, is uniquely given up to constants. Note in particular that conditions~\eqref{eq:gent MPR condition on H} and \eqref{eq:gent MPR condition on L} do not involve $\Psi$.

\begin{theorem}[{\cite[Th.~3.6]{KLMP2018math}}]

Let $\L:T\W\to\RR_+$ be an L-function with convex dual $\H$, and let $\F:\W\to\RR$ and a vector field $b(w)\in T_w$ be given for which $\langle d_w\F(w),b(w)\rangle=0$. Then the following statements are equivalent:
\begin{enumerate}[(i)]
\item $\L$ induces a pGGEN $(\W,\Psi,\F,b)$ for some dissipation potential $\Psi$,
\item $\H(w,\zeta) = \Psi^*\big(w,\zeta-d_w\F(w)\big) - \Psi^*\big(w,-d_w\F(w)\big)+\langle \zeta,b(w)\rangle$ for some diss. pot. $\Psi^*$,
  \eqnum\label{eq:gent KLMP H induce pGGEN}
\item $d_\zeta\H\big(w,d_w\F(w)\big)=b(w)$,
%  \eqnum\label{eq:gent KLMP condition on H}
\item $d_j\L\big(w,b(w)\big)=d_w\F(w)$.
  \eqnum\label{eq:gent KLMP condition on L}
\end{enumerate}
In that case $\Psi^*$ is uniquely determined by
\begin{equation}
  \Psi^*(w,\zeta)=\H\big(w,\zeta+d_w\F(w)\big) - \H\big(w,d_w\F(w)\big) - \langle \zeta,b(w)\rangle.
\label{eq:gent KLMP Psis from H}
\end{equation}
\label{th:gent KLMP}
\end{theorem}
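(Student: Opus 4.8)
The plan is to reduce the statement to Theorem~\ref{th:gent MPR} by a change of variables that absorbs the drift~$b$ into the L-function. Since $b(w)\in T_w$ lives in a vector space, for each fixed $w$ I may translate inside the tangent space and set
\begin{equation*}
  \tilde\L(w,j):=\L\big(w,j+b(w)\big).
\end{equation*}
First I would verify that $\tilde\L$ is again an L-function: it is non-negative, convex in $j$ (translation preserves convexity), and vanishes exactly at the unique vector field $\A_{\tilde\L}(w)=\A_\L(w)-b(w)$. A one-line computation with the substitution $j'=j+b(w)$ then gives its convex dual,
\begin{equation*}
  \tilde\H(w,\zeta)=\sup_{j\in T_w}\langle\zeta,j\rangle-\L\big(w,j+b(w)\big)=\H(w,\zeta)-\langle\zeta,b(w)\rangle .
\end{equation*}

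The heart of the argument, and the one step that needs genuine care, is the claim that under the hypothesis $\langle d\F(w),b(w)\rangle=0$ the L-function $\tilde\L$ induces the GGS $(\W,\Psi,\F)$ \emph{if and only if} $\L$ induces the pGGEN $(\W,\Psi,\F,b)$. To see this I would substitute $j=\tilde\jmath+b(w)$ into the pGGEN defining relation; the term $\Psi(w,j-b(w))$ becomes $\Psi(w,\tilde\jmath)$ and the affine term becomes $\langle d\F(w),\tilde\jmath\rangle+\langle d\F(w),b(w)\rangle$. The orthogonality assumption is precisely what kills the leftover cross term $\langle d\F(w),b(w)\rangle$, so that the relation collapses to the GGS defining identity~\eqref{eq:gent GGS L=PsiPsi*} for $\tilde\L$ with the \emph{same} potential $\Psi$ and free energy $\F$. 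This equivalence of defining relations is the main obstacle; everything downstream is bookkeeping.

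With this correspondence established I would apply Theorem~\ref{th:gent MPR} to the triple $(\tilde\L,\tilde\H,\F)$ and translate each of its four equivalent conditions back using $\tilde\H(w,\zeta)=\H(w,\zeta)-\langle\zeta,b(w)\rangle$ and $\tilde\L(w,0)=\L(w,b(w))$. Condition~(i) of Theorem~\ref{th:gent MPR} is our~(i) by the correspondence above; condition~(ii) becomes our~(ii) after transposing the term $\langle\zeta,b(w)\rangle$; the condition $d_\zeta\tilde\H(w,d\F(w))=0$ becomes $d_\zeta\H(w,d\F(w))=b(w)$, i.e.\ (iii); and $d_j\tilde\L(w,0)=d\F(w)$ becomes $d_j\L(w,b(w))=d\F(w)$, i.e.\ (iv), the equivalence (iii)$\Leftrightarrow$(iv) being the Legendre involution already contained in Theorem~\ref{th:gent MPR}. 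Finally, inserting $\tilde\H=\H-\langle\,\cdot\,,b\rangle$ into the uniqueness formula~\eqref{eq:MPR Psis from H} yields
\begin{equation*}
  \Psi^*(w,\zeta)=\tilde\H\big(w,\zeta+d\F(w)\big)-\tilde\H\big(w,d\F(w)\big)
  =\H\big(w,\zeta+d\F(w)\big)-\H\big(w,d\F(w)\big)-\langle\zeta,b(w)\rangle ,
\end{equation*}
where the two occurrences of $\langle d\F(w),b(w)\rangle$ cancel algebraically; this is exactly~\eqref{eq:gent KLMP Psis from H} and simultaneously delivers the uniqueness of $\Psi^*$.
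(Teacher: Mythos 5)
Your proof is correct: the shifted function $\tilde\L(w,j):=\L\big(w,j+b(w)\big)$ is indeed an L-function with convex dual $\H(w,\zeta)-\langle\zeta,b(w)\rangle$, the hypothesis $\langle d_w\F(w),b(w)\rangle=0$ exactly cancels the cross term so that $\L$ inducing the pGGEN $(\W,\Psi,\F,b)$ is equivalent to $\tilde\L$ inducing the GGS $(\W,\Psi,\F)$, and translating the four conditions of Theorem~\ref{th:gent MPR} together with \eqref{eq:MPR Psis from H} delivers (i)--(iv) and the uniqueness formula \eqref{eq:gent KLMP Psis from H} as you computed. The paper gives no proof of this theorem itself (it is quoted from \cite[Th.~3.6]{KLMP2018math}), but your reduction-by-shift is precisely the technique the paper employs for the Generic analogue, namely Lemma~\ref{lem:gent shift GGS into GGEN} and the proposition following Theorem~\ref{th:gent KLMP}, so your argument matches the paper's own methodology for this family of statements.
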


From condition~\eqref{eq:gent KLMP H induce pGGEN} we see that $\H$ must consist of a convex part and a linear part $\langle \zeta,b(w)\rangle$, so that the drift $b$ is a priori and uniquely fixed by $\H$. Therefore $\F$ is again uniquely fixed by condition~\eqref{eq:gent KLMP condition on L}. This is different from the Generic setting; $Ld\E$ uniquely defines $d\F$ and vice versa, but the whole quintuple may not be unique. However, one can still state a Generic analogue of Theorems~\ref{th:gent MPR} and \ref{th:gent KLMP} as follows:

\begin{proposition} Let $\L:T\W\to\RR_+$ be an L-function with convex dual $\H$, and let a Poisson structure $L:T^*\W\to T\W$ and energies $\E,\F:\W\to\RR$ be given such that the non-interaction condition~$Ld\F=0$ holds. Then the following statements are equivalent:
\begin{enumerate}[(i)]
\item $\L$ induces a GGEN $(\W,\Psi,\F,L,\E)$ for some dissipation potential $\Psi$,
\item $d_s\L\big(w,L(w) d_w\E(w)\big)=d_w\F(w)$ and
%\eqnum\label{eq:GGEN condition on L}
  \begin{equation}
    \L(w,j)=\infty \qquad \text{ for all } j\in T_w \text{ for which } \langle d\E,j\rangle\neq0,
    \label{eq:gent NIC condition on L}    
  \end{equation}
\item $d_\zeta\H\big(w,d_w\F(w)\big)=L(w)d_w\E(w)$ and
\eqnum\label{eq:gent GGEN condition on H}
  \begin{equation}
    \H\big(w,\zeta+\lambda d\E(w)\big) = \H(w,\zeta) \qquad\text{for all } (w,\zeta)\in T^*\W \text{ and } \lambda \in\RR.
    \label{eq:gent NIC condition on H}
  \end{equation}
\end{enumerate}
In that case $\Psi^*$ is uniquely determined by
\begin{equation}
  \Psi^*(w,\zeta)=\H\big(w,\zeta+d_w\F(w)\big) - \H\big(w,d_w\F(w)\big).
\label{eq:gent MPR GGEN Psis from H}
\end{equation}
%\label{prop:gent MPR GGEN}
\end{proposition}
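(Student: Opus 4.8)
The plan is to recognise a GGEN as the pGGEN of Theorem~\ref{th:gent KLMP} for the special drift $b(w):=L(w)d\E(w)$, so that the Hamiltonian structure contributes exactly one extra ingredient: a non-interaction condition that turns out to admit equivalent formulations on $\Psi^*$, on $\H$, and on $\L$. First I would check that $b=Ld\E$ is an admissible pGGEN drift. By the skew-symmetry of the Poisson operator recorded after its definition, $\langle d\F(w),Ld\E(w)\rangle=-\langle d\E(w),Ld\F(w)\rangle=0$, using the hypothesis~\eqref{eq:gent NIC Psi*DE} that $Ld\F=0$; the same skew-symmetry yields $\langle d\E(w),Ld\E(w)\rangle=0$, which I will use repeatedly. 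With $b=Ld\E$ the induced-GGEN identity~\eqref{eq:GGEN L=PsiPsi*} coincides word for word with the defining identity of a pGGEN with that drift, and the remaining GGEN data (that $L$ is Poisson and that $Ld\F=0$) are among the hypotheses. Hence ``$\L$ induces a GGEN $(\W,\Psi,\F,L,\E)$'' is equivalent to ``$\L$ induces the pGGEN $(\W,\Psi,\F,Ld\E)$ and its (unique) $\Psi^*$ satisfies~\eqref{eq:gent NIC LDF}''.

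Applying Theorem~\ref{th:gent KLMP} with this drift settles the gradient parts at once: its $\H$-condition reads $d_\zeta\H(w,d\F)=Ld\E$, which is~\eqref{eq:gent GGEN condition on H}; its $\L$-condition reads $d_j\L(w,Ld\E)=d\F$, which is the first assertion of~(ii); and it produces a unique dissipation potential via~\eqref{eq:gent KLMP Psis from H} with $b=Ld\E$, namely $\Psi^*(w,\zeta)=\H(w,\zeta+d\F)-\H(w,d\F)-\langle\zeta,Ld\E\rangle$. What remains is to match the three non-interaction conditions appearing in~(i),~(ii) and~(iii).

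The equivalence~\eqref{eq:gent NIC condition on L}~$\Leftrightarrow$~\eqref{eq:gent NIC condition on H} is pure Fenchel duality fibre by fibre: for the proper convex function $\L(w,\cdot)$ with conjugate $\H(w,\cdot)$, the map $\H(w,\cdot)$ is invariant under translations by the covector $d\E$ if and only if $\L(w,\cdot)$ equals $+\infty$ off the annihilator $\{j:\langle d\E,j\rangle=0\}$ — one direction by inserting $\zeta+\lambda d\E$ into the supremum defining $\L$ and sending $\lambda\to\pm\infty$, the other by restricting the supremum defining $\H$ to the support of $\L$. For~\eqref{eq:gent NIC condition on H}~$\Leftrightarrow$~\eqref{eq:gent NIC LDF} I would substitute $\zeta\mapsto\zeta+\lambda d\E$ into the formula for $\Psi^*$ above: its first term is invariant by~\eqref{eq:gent NIC condition on H} and its linear term by $\langle d\E,Ld\E\rangle=0$, which gives~\eqref{eq:gent NIC LDF}; conversely, solving that formula for $\H$ writes $\H(w,\cdot)$ as $\Psi^*$ plus the same invariant linear term, so~\eqref{eq:gent NIC LDF} forces~\eqref{eq:gent NIC condition on H}. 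Assembling the pieces gives (i)~$\Leftrightarrow$~(iii)~$\Leftrightarrow$~(ii), with uniqueness of $\Psi^*$ inherited from Theorem~\ref{th:gent KLMP}.

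The step I expect to demand the most care is the bookkeeping of the linear term $\langle\zeta,Ld\E\rangle$ carried by the formula from Theorem~\ref{th:gent KLMP}: one must verify it is annihilated precisely along the $d\E$-direction, by skew-symmetry, so that the unique $\Psi^*$ really does obey the invariance~\eqref{eq:gent NIC LDF}, and that $\L$ pins down only the vector field $Ld\E$ and not the full Poisson triple — consistent with the non-uniqueness of the Hamiltonian structure noted just before the Proposition. A secondary, more routine point is to make the support argument for~\eqref{eq:gent NIC condition on L}~$\Leftrightarrow$~\eqref{eq:gent NIC condition on H} rigorous on a manifold; since it uses only the fibrewise duality $\H(w,\cdot)=\L(w,\cdot)^*$ and properness of the L-function, it transfers tangent space by tangent space with no extra geometric input.
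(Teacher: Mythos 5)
Your proof is correct, and it rests on the same underlying idea as the paper's proof --- absorb the Hamiltonian drift $L\,d\E$ and fall back on an earlier characterisation theorem --- but through a formally different reduction. The paper proves the Proposition by applying Theorem~\ref{th:gent MPR} to the shifted L-function $\tilde\L(w,j):=\L\big(w,j+L(w)d_w\E(w)\big)$ via Lemma~\ref{lem:gent shift GGS into GGEN}; you instead invoke Theorem~\ref{th:gent KLMP} directly with the drift $b=Ld\E$, after verifying admissibility $\langle d\F,Ld\E\rangle=0$ from skew-symmetry and the hypothesis $Ld\F=0$. The two reductions are equivalent in substance, but your route has the merit of spelling out the equivalences among the three non-interaction conditions, which the paper's two-sentence proof merely asserts: the fibrewise Fenchel-duality argument for \eqref{eq:gent NIC condition on L}~$\Leftrightarrow$~\eqref{eq:gent NIC condition on H}, and the substitution argument, using $\langle d\E,Ld\E\rangle=0$, for \eqref{eq:gent NIC condition on H}~$\Leftrightarrow$~\eqref{eq:gent NIC LDF}. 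Both arguments are sound as you state them.

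The linear-term bookkeeping you flag as the delicate step is indeed the crux, and it is to your credit rather than a gap: the unique dissipation potential your route produces is
\[
  \Psi^*(w,\zeta)=\H\big(w,\zeta+d_w\F(w)\big)-\H\big(w,d_w\F(w)\big)-\langle\zeta,L(w)d_w\E(w)\rangle,
\]
i.e.\ \eqref{eq:gent KLMP Psis from H} with $b=Ld\E$, and this does \emph{not} coincide with the printed formula \eqref{eq:gent MPR GGEN Psis from H} unless $Ld\E=0$. Your version is the correct one: dualising \eqref{eq:GGEN L=PsiPsi*} in $j$ gives $\H(w,\zeta)=\Psi^*\big(w,\zeta-d_w\F(w)\big)-\Psi^*\big(w,-d_w\F(w)\big)+\langle\zeta,L(w)d_w\E(w)\rangle$, so the linear correction is unavoidable; the paper's own shift route yields the same, since the convex dual of $\tilde\L$ is $\tilde\H(w,\zeta)=\H(w,\zeta)-\langle\zeta,Ld\E\rangle$, whence \eqref{eq:gent MPR GGEN Psis from H} is correct only if the $\H$ appearing there is read as $\tilde\H$. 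As printed it is an error: for instance on $\W=\RR^3$ with constant Poisson structure satisfying $Ld\E=(0,1,0)\tp$, $\F$ depending on $w_1$ only, and $\Psi^*(w,\zeta)=\tfrac12(\zeta_1^2+\zeta_2^2)$, the right-hand side of \eqref{eq:gent MPR GGEN Psis from H} exceeds $\Psi^*(w,\zeta)$ by exactly $\langle\zeta,Ld\E\rangle=\zeta_2$. So your proof stands, and it additionally corrects the closing formula of the statement.
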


\begin{proof} By Lemma~\ref{lem:gent shift GGS into GGEN} (see below), we can apply Theorem~\ref{th:gent MPR} to the shifted L-function $\tilde\L(w,j):=\L\big(w,j+L(w)d_w\E(w)\big)$ and back. This yields the three equivalences, apart from the other non-interaction condition~\eqref{eq:gent NIC Psi*DE}. From the explicit formula~\eqref{eq:gent MPR GGEN Psis from H} one finds that the missing non-interaction condition is equivalent to~\eqref{eq:gent NIC condition on L} and to~\eqref{eq:gent NIC condition on H}.
\end{proof}

The previous proof made use of the following lemma:

\begin{lemma} Let $L:T^*\W\to T\W$ be a Poisson operator, $\E,\F:\W\to\RR$ be energies and $\Psi$ be a dissipation potential such that the non-interaction conditions~\eqref{eq:gent NIC Psi*DE} and \eqref{eq:gent NIC LDF} hold. An L-function $\L$ induces an GGEN $(\W,\Psi,\F,L,\E)$ if and only if the shifted L-function $\tilde\L(w,j):=\L\big(w,j+L(w)d_w\E(w)\big)$ induces the GGS $(\W,\Psi,\F)$.
\label{lem:gent shift GGS into GGEN}
\end{lemma}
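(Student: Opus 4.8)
The plan is to reduce the equivalence to a single affine change of variables in the two defining identities, the only structural input being that the Poisson operator kills one cross term. First I would unfold both statements. By the definition \eqref{eq:gent GGS L=PsiPsi*}, the shifted L-function $\tilde\L$ induces the GGS $(\W,\Psi,\F)$ precisely when
\[
  \tilde\L(w,j)=\Psi(w,j)+\Psi^*\big(w,-d\F(w)\big)+\langle d\F(w),j\rangle
  \qquad\text{for all }(w,j)\in T\W,
\]
while by \eqref{eq:GGEN L=PsiPsi*} the L-function $\L$ induces the GGEN $(\W,\Psi,\F,L,\E)$ precisely when
\[
  \L(w,j)=\Psi\big(w,j-L(w)d\E(w)\big)+\Psi^*\big(w,-d\F(w)\big)+\langle d\F(w),j\rangle
  \qquad\text{for all }(w,j)\in T\W.
\]
Since $\Psi$ and $\F$ are fixed and common to both structures, the whole claim is purely about these two identities.

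Next I would insert the definition $\tilde\L(w,j)=\L\big(w,j+L(w)d\E(w)\big)$ into the GGS identity and perform, on each tangent space $T_w$, the change of variable $\bar\jmath=j+L(w)d\E(w)$. This is a genuine bijection of $T_w$ onto itself, because for fixed $w$ the vector $L(w)d\E(w)$ is a fixed translation; hence the GGS identity holds for all $j$ if and only if, after substitution, it holds for all $\bar\jmath$ in the form
\[
  \L(w,\bar\jmath)=\Psi\big(w,\bar\jmath-L(w)d\E(w)\big)+\Psi^*\big(w,-d\F(w)\big)
    +\big\langle d\F(w),\bar\jmath-L(w)d\E(w)\big\rangle.
\]
Expanding the linear pairing, this differs from the GGEN identity for $\L$ only by the scalar $\langle d\F(w),L(w)d\E(w)\rangle$.

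The single place where a hypothesis is genuinely used is the vanishing of this cross term. Using the skew-symmetry of the Poisson operator $L$ recorded just after its definition, together with the non-interaction condition \eqref{eq:gent NIC Psi*DE}, I would write
\[
  \langle d\F(w),L(w)d\E(w)\rangle=-\langle d\E(w),L(w)d\F(w)\rangle=0 .
\]
With this term removed the two displayed identities coincide verbatim, and since the change of variable is bijective the GGS relation for $\tilde\L$ and the GGEN relation for $\L$ are equivalent, in both directions at once.

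I do not anticipate a genuine obstacle: once the definitions are unfolded the argument is a \emph{one-line substitution}, and skew-symmetry together with $Ld\F=0$ is exactly what cancels the only mismatch. It is worth noting that the second non-interaction condition \eqref{eq:gent NIC LDF} on $\Psi^*$ is not actually invoked here, since the constant term $\Psi^*\big(w,-d\F(w)\big)$ is left untouched by the shift; only the condition \eqref{eq:gent NIC Psi*DE} enters. The one subtlety to state carefully is the legitimacy of the change of variable, i.e.\ that $j\mapsto j+L(w)d\E(w)$ is a bijection of $T_w$, which is immediate as it is a translation by a fixed vector.
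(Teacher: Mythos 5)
Your proof is correct and takes essentially the same route as the paper: the paper's own proof is the one-line observation that the shift $j\mapsto j+L(w)\,d\E(w)$ transforms relation \eqref{eq:GGEN L=PsiPsi*} into \eqref{eq:gent GGS L=PsiPsi*}, which is exactly your substitution argument with the bookkeeping (the bijective change of variables and the expansion of the linear pairing) written out. Your closing remark is also mathematically accurate: the only cross term $\langle d\F(w),L(w)d\E(w)\rangle$ is cancelled by skew-symmetry together with $L(w)\,d\F(w)=0$, i.e.\ the condition displayed in \eqref{eq:gent NIC Psi*DE}, and not by the $\Psi^*$-invariance condition; the paper's proof cites \eqref{eq:gent NIC LDF} instead, but this is an artefact of the two non-interaction conditions carrying interchanged label names in the source, not a genuinely different argument.
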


\begin{proof} Because of the non-interaction condition~\eqref{eq:gent NIC LDF}, the shift transforms relation~\eqref{eq:GGEN L=PsiPsi*} into \eqref{eq:gent GGS L=PsiPsi*}, and analogously for the other direction.
\end{proof}

\subsection{Relation between structures in flux and state space}

We now consider L-functions $\L$ and $\hat\L$ on flux and state space, and study how their induced structures are related.

\begin{proposition} Assume that an L-function $\L:T\W\to\RR_+$ induces a pGGEN $(\W,\Psi,\F,b)$ where $d\phi_w b(w) =0$ and
\begin{equation*}
 \hat\F\big(\phi\lbrack w\rbrack\big)=\F(w), \qquad\text{(up to constants)}
%    \label{eq:gent hat F from F}
\end{equation*}
for some $\hat\F:\X\to\RR$. Then the L-function $\hat\L$ given by \eqref{eq:gent hatL=inf L} induces a GGS $(\X,\hat\Psi,\hat\F)$ for some dissipation potential $\hat\Psi$.
\label{prop:gent flux pGGEN then state GGS}
\end{proposition}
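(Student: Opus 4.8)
The plan is to reduce the statement to the Hamiltonian characterisation of an induced GGS on state space, that is, to the equivalence (i)$\iff$(iii) in Theorem~\ref{th:gent MPR}. Since $\hat\L$ is an L-function by assumption and $\hat\F$ is already given, it suffices to verify condition~\eqref{eq:gent MPR condition on H} for the pair $(\hat\L,\hat\F)$, namely $d_\xi\hat\H\big(\rho,d_\rho\hat\F(\rho)\big)=0$ for all $\rho\in\X$, where $\hat\H$ is the convex dual of $\hat\L$. The dissipation potential $\hat\Psi$ is then produced automatically, with $\hat\Psi^*$ given by the explicit formula~\eqref{eq:MPR Psis from H}.

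First I would establish the pointwise relation between the two Hamiltonians. Inserting the contraction formula~\eqref{eq:gent hatL=inf L} into the Legendre transform defining $\hat\H$ and interchanging the two suprema gives
\[
  \hat\H(\rho,\xi) = \sup_{w:\,\phi[w]=\rho}\,\sup_{j\in T_w} \big\langle d\phi_w\tp\xi,\,j\big\rangle - \L(w,j) = \sup_{w:\,\phi[w]=\rho}\H\big(w,d\phi_w\tp\xi\big),
\]
where I used $\langle\xi,d\phi_w j\rangle=\langle d\phi_w\tp\xi,j\rangle$. The flux invariance condition~\eqref{eq:gent invariance} states precisely that $w\mapsto\H(w,d\phi_w\tp\xi)$ is constant on the fibre $\phi^{-1}[\rho]$, so the supremum is superfluous and $\hat\H(\rho,\xi)=\H\big(w,d\phi_w\tp\xi\big)$ for any representative $w$ with $\phi[w]=\rho$; this is the manifold version of the identity noted at the end of Section~\ref{sec:reaction fluxes}.

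Next I would differentiate this identity in $\xi$ at fixed $\rho$ (hence fixed representative $w$). By the chain rule and the adjoint relation for $d\phi_w\tp$ this yields $d_\xi\hat\H(\rho,\xi)=d\phi_w\,d_\zeta\H\big(w,d\phi_w\tp\xi\big)$. Separately, differentiating the pull-back identity $\hat\F(\phi[w])=\F(w)$ gives $d_w\F(w)=d\phi_w\tp\,d_\rho\hat\F(\phi[w])$, so that evaluating at $\xi=d_\rho\hat\F(\rho)$ turns the inner cotangent $d\phi_w\tp\xi$ into $d_w\F(w)$:
\[
  d_\xi\hat\H\big(\rho,d_\rho\hat\F(\rho)\big)=d\phi_w\,d_\zeta\H\big(w,d_w\F(w)\big).
\]
Because $\L$ induces the pGGEN $(\W,\Psi,\F,b)$, condition~(iii) of Theorem~\ref{th:gent KLMP} gives $d_\zeta\H(w,d_w\F(w))=b(w)$, and the hypothesis $d\phi_w b(w)=0$ then yields $d_\xi\hat\H(\rho,d_\rho\hat\F(\rho))=d\phi_w b(w)=0$. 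This is exactly~\eqref{eq:gent MPR condition on H} on state space, and the conclusion follows from Theorem~\ref{th:gent MPR}.

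I expect the only real obstacle to be the justification of the Hamiltonian identity and its differentiation. Interchanging the suprema is routine, but promoting $\hat\H(\rho,\xi)=\sup_{w}\H(w,d\phi_w\tp\xi)$ to a genuine pointwise equality (dropping the supremum) is exactly where the flux invariance condition~\eqref{eq:gent invariance} is needed; and differentiating in $\xi$ requires that the representative $w$ can be held fixed while $\xi$ varies — legitimate because the fibre $\phi^{-1}[\rho]$ does not move with $\xi$ — together with the stated smoothness of $\H$ and $\hat\H$. The remainder is careful bookkeeping of the tangent/cotangent pairings and of the adjoint $d\phi_w\tp$.
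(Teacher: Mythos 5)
Your reduction to the Hamiltonian condition of Theorem~\ref{th:gent MPR} is natural, but the proof has a genuine gap: in the step where you drop the supremum over the fibre you invoke the flux-invariance condition~\eqref{eq:gent invariance}, which is \emph{not} among the hypotheses of Proposition~\ref{prop:gent flux pGGEN then state GGS}. The proposition is deliberately stated without it; invariance is introduced only later, as an extra assumption in Theorem~\ref{th:gent flux pGGEN equiv state GGS}, precisely in order to upgrade the one-way implication to an equivalence. Without~\eqref{eq:gent invariance} all you can assert is $\hat\H(\rho,\xi)=\sup_{w\in\phi^{-1}\lbrack\rho\rbrack}\H(w,d\phi_w\tp\xi)$, and your subsequent differentiation in $\xi$ at a fixed representative $w$ breaks down: differentiating a supremum over a (possibly non-compact) fibre requires attainment of the supremum plus an envelope-type (Danskin) argument, neither of which is available here --- the hypotheses (pGGEN, $\F=\hat\F\circ\phi$, $d\phi_w b(w)=0$) do not prevent $\Psi^*(w,\cdot)$, and hence $\H(w,d\phi_w\tp\xi)$, from genuinely varying along the fibre. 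So as written your argument proves a strictly weaker statement in which~\eqref{eq:gent invariance} is assumed (essentially one direction of Theorem~\ref{th:gent flux pGGEN equiv state GGS}), not the proposition itself.

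The paper's proof avoids Hamiltonians and derivatives of suprema altogether, working at the level of L-functions and subgradients. Insert the pGGEN decomposition of $\L$ into the contraction~\eqref{eq:gent hatL=inf L}; using $d_w\F(w)=d\phi_w\tp d_\rho\hat\F(\phi\lbrack w\rbrack)$ to convert $\langle d\F(w),j\rangle$ into $\langle d_\rho\hat\F(\rho),s\rangle$ on the constraint set, and $d\phi_w b(w)=0$ to absorb the shift by $b(w)$ into the inner infimum, one obtains
\begin{equation*}
  \hat\L(\rho,s)=\inf_{w:\phi\lbrack w\rbrack=\rho}\Big(\inf_{j:\,d\phi_w j=s}\Psi(w,j)\Big)+\Psi^*\big(w,-d\phi_w\tp d_\rho\hat\F(\rho)\big)+\langle d_\rho\hat\F(\rho),s\rangle .
\end{equation*}
Since $\Psi\geq0$ and $\Psi(w,0)=0$, this gives $\hat\L(\rho,s)-\langle d_\rho\hat\F(\rho),s\rangle\geq\inf_{w:\phi\lbrack w\rbrack=\rho}\Psi^*\big(w,-d\phi_w\tp d_\rho\hat\F(\rho)\big)=\hat\L(\rho,0)$, i.e.\ $d_\rho\hat\F(\rho)\in\partial_s\hat\L(\rho,0)=\{d_s\hat\L(\rho,0)\}$, which is exactly condition~\eqref{eq:gent MPR condition on L} for $\hat\L$; Theorem~\ref{th:gent MPR} then yields the GGS. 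If you wish to keep your Hamiltonian route, you must either add~\eqref{eq:gent invariance} as a hypothesis (changing the statement) or replace the pointwise identity by an inequality/subdifferential argument of this kind, since the supremum over the fibre cannot be removed.
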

\begin{proof}
Since $d_w\hat\F(\phi\lbrack w\rbrack)=d\phi_w\tp d_\rho\hat\F(\phi\lbrack w\rbrack)$ and $d\phi_w b(w) =0$, we can rewrite
\begin{equation*}
  \hat\L(\rho,s) = \inf_{w\in \W:\phi\lbrack w\rbrack=\rho} \Big( \inf_{j\in T_w: d\phi_w j=s} \Psi(w,j)\Big) + \Psi^*\big(w,-d\phi_w\tp d_\rho\hat\F(\rho)\big) + \langle d_\rho\hat\F(\rho),s \rangle,
%\label{eq:gent hat L rewritten for known F}
\end{equation*}
and because $\Psi,\Psi^*$ is a dissipation potential pair, clearly
\begin{equation*}
  \hat\L(\rho,s) - \langle d_\rho\hat\F(\rho),s \rangle\geq \inf_{w\in \W:\phi\lbrack w\rbrack=\rho} \Psi^*\big(w,-d\phi_w\tp d_\rho\hat\F(\rho)\big)         = \hat \L(\rho,0).
\end{equation*}
This is equivalent to $d_\rho\hat\F(\rho)\in\partial_s\hat\L(\rho,0) = \{d_s\hat\L(\rho,0)\}$, which by Theorem~\ref{th:gent MPR} implies that $\hat\L$ induces a GGS $(\X,\hat\Psi,\hat\F)$ for some $\hat\Psi$. 
\end{proof} In the above proposition, $\F$ also on $w$ through $\phi\lbrack w\rbrack$ only, which is a very physical assumption. It does imply however, that the equilibria of the flux gradient system can only be 
unique up to the kernel of $\phi$; this kernel can be interpreted as a generalisation of divergence-free vector fields.

A natural question is now whether we can turn the statement of Proposition~\ref{prop:gent flux pGGEN then state GGS} around.  Indeed, if the invariance condition~\eqref{eq:gent invariance} holds and we restrict to pGGEN with 'divergence-free drifts', then the statement becomes an equivalence, and we have an explicit relation between the flux and state dissipation potentials. This is a stronger version of the statement in~\cite[Prop.~4.7]{Renger2017}, where we related GGS to so-called `force structures'.

\begin{theorem} Assume that an L-function $\L:T\W\to\RR_+$ with corresponding dual $\H$ satisfies the invariance condition~\eqref{eq:gent invariance}, and let the L-function $\hat\L:\X\to\RR_+$ be given by~\eqref{eq:gent hatL=inf L}. Then $\L$ induces a pGGEN $(\W,\Psi,\F,b)$ with $d\phi_w b(w)=0$ and $\hat\F\circ\phi$ for some $\hat\F:\X\to\RR$ if and only if $\hat\L$ induces a GGS $(\X,\hat\Psi^*,\hat\F)$. In that case the dissipation potentials $\hat\Psi$ and $\hat\Psi^*$ are related to $\Psi$ and $\Psi^*$ through
\begin{subequations}
\begin{align}
  &\hat\Psi\big(\phi\lbrack w\rbrack,s\big)=\inf_{\substack{j\in T_w:\\d\phi_w j = s}} \Psi(w,j)
    \label{eq:gent hat Psi from Psi} \\
  &\hat\Psi^*\big(\phi\lbrack w\rbrack,\xi\big)=\Psi^*(w,d\phi_w\tp\xi).
    \label{eq:gent hat Psis from Psis}
\end{align}
\label{eq:gent hat diss from diss}
\end{subequations}
\label{th:gent flux pGGEN equiv state GGS}
\end{theorem}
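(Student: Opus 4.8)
The plan is to reduce everything to the two characterisation results, Theorem~\ref{th:gent MPR} (GGS on $\X$) and Theorem~\ref{th:gent KLMP} (pGGEN on $\W$), glued together by a single transfer identity. The forward implication is already contained in Proposition~\ref{prop:gent flux pGGEN then state GGS}, so the real content is the reverse implication together with the explicit formulas~\eqref{eq:gent hat diss from diss}. First I would dualise the contraction~\eqref{eq:gent hatL=inf L} in its second argument; a direct computation gives $\hat\H(\rho,\xi)=\sup_{w\in\phi^{-1}\lbrack\rho\rbrack}\H(w,d\phi_w\tp\xi)$, and the invariance condition~\eqref{eq:gent invariance} collapses the supremum to the pointwise identity
\[
  \hat\H\big(\phi\lbrack w\rbrack,\xi\big)=\H\big(w,d\phi_w\tp\xi\big).
\]
Differentiating in $\xi$ at fixed $w$ (so that $d\phi_w\tp$ is a fixed linear map) and using $\langle d\phi_w\tp\eta,j\rangle=\langle\eta,d\phi_w j\rangle$ then yields the chain rule
\[
  d_\xi\hat\H(\rho,\xi)=d\phi_w\,d_\zeta\H\big(w,d\phi_w\tp\xi\big),
\]
which is the workhorse of the whole argument.

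Next I would fix $\hat\F$, set $\F:=\hat\F\circ\phi$ so that $d_w\F(w)=d\phi_w\tp d_\rho\hat\F(\rho)$, and introduce the candidate drift $b(w):=d_\zeta\H\big(w,d_w\F(w)\big)$. Evaluating the chain rule at $\xi=d_\rho\hat\F(\rho)$ gives the pivotal relation $d\phi_w b(w)=d_\xi\hat\H\big(\rho,d_\rho\hat\F(\rho)\big)$. For the reverse direction, if $\hat\L$ induces the GGS $(\X,\hat\Psi,\hat\F)$, then Theorem~\ref{th:gent MPR} via~\eqref{eq:gent MPR condition on H} gives $d_\xi\hat\H\big(\rho,d_\rho\hat\F(\rho)\big)=0$, hence $d\phi_w b=0$; the admissibility $\langle d_w\F,b\rangle=\langle d_\rho\hat\F,d\phi_w b\rangle=0$ is then automatic, and since $b=d_\zeta\H(\,\cdot\,,d_w\F)$ is exactly condition (iii) of Theorem~\ref{th:gent KLMP}, that theorem delivers the pGGEN $(\W,\Psi,\F,b)$ with $d\phi_w b=0$. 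The forward direction runs the same chain backwards: Theorem~\ref{th:gent KLMP}(iii) identifies $b=d_\zeta\H(\,\cdot\,,d_w\F)$, the chain rule turns $d\phi_w b=0$ into $d_\xi\hat\H(\rho,d_\rho\hat\F)=0$, and~\eqref{eq:gent MPR condition on H} returns the GGS.

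It then remains to derive~\eqref{eq:gent hat diss from diss}. Substituting $\zeta=d\phi_w\tp\xi$ into the explicit formula~\eqref{eq:gent KLMP Psis from H} for $\Psi^*$ and simplifying the three terms with the transfer identity (for the first two) and with $\langle d\phi_w\tp\xi,b(w)\rangle=\langle\xi,d\phi_w b(w)\rangle=0$ (for the last) reduces it exactly to the right-hand side of~\eqref{eq:MPR Psis from H} at the state level, i.e. $\Psi^*(w,d\phi_w\tp\xi)=\hat\Psi^*(\rho,\xi)$; in particular the left-hand side is automatically fibre-independent, which is the consistency one needs. This is~\eqref{eq:gent hat Psis from Psis}. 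Finally, \eqref{eq:gent hat Psi from Psi} is its Fenchel dual: the conjugate in $s$ of $s\mapsto\inf_{d\phi_w j=s}\Psi(w,j)$ is precisely $\xi\mapsto\Psi^*(w,d\phi_w\tp\xi)=\hat\Psi^*(\rho,\xi)$, so dualising back recovers $\hat\Psi(\rho,s)=\inf_{d\phi_w j=s}\Psi(w,j)$.

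The two places that need genuine care are the following. First, it is the invariance condition~\eqref{eq:gent invariance}, and nothing else, that upgrades the sup-relation to the pointwise identity $\hat\H(\rho,\xi)=\H(w,d\phi_w\tp\xi)$ and thereby licenses the differentiation at an \emph{arbitrary} $w$ in the fibre; without it the chain rule would hold only at an optimising $w$, which is not enough to conclude $d\phi_w b=0$. Second, the passage from~\eqref{eq:gent hat Psis from Psis} to~\eqref{eq:gent hat Psi from Psi} identifies $\hat\Psi$ with the marginal $g(s):=\inf_{d\phi_w j=s}\Psi(w,j)$ only up to lower-semicontinuous convex regularisation, i.e. one needs $g^{**}=g$. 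This is the main technical obstacle, but it is harmless under the paper's standing assumption that all functionals are sufficiently regular, since the convex marginal of a convex dissipation potential is again convex, vanishes at $s=0$, and is lower semicontinuous in the relevant settings.
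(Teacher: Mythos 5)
Your proof is correct and follows essentially the same route as the paper's: both directions are reduced to Theorems~\ref{th:gent MPR} and \ref{th:gent KLMP} via the invariance-induced identity $\hat\H(\phi\lbrack w\rbrack,\xi)=\H(w,d\phi_w\tp\xi)$, the chain rule $d_\xi\hat\H(\rho,\xi)=d\phi_w\,d_\zeta\H(w,d\phi_w\tp\xi)$ evaluated at $\xi=d_\rho\hat\F(\rho)$ with the drift $b(w):=d_\zeta\H\big(w,d_w\F(w)\big)$, and the explicit formulas \eqref{eq:MPR Psis from H}/\eqref{eq:gent KLMP Psis from H} followed by Fenchel duality to obtain \eqref{eq:gent hat Psi from Psi}. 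Your only deviations are cosmetic --- you verify condition (iii) of Theorem~\ref{th:gent KLMP} where the paper verifies its condition (ii), and you explicitly flag the biconjugation point $g^{**}=g$ behind \eqref{eq:gent hat Psi from Psi}, which the paper dismisses as ``easily checked'' --- so nothing further is needed.
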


\begin{proof} Assume that $\L$ induces a pGGEN $(\W,\Psi,\F,b)$ with $d\phi_w b(w)=0$ and $\hat\F\circ\phi$. Since by assumption $\H(w,d\phi_w\tp\xi)$ does not depend on $w\in\phi^{-1}\lbrack\rho\rbrack$, by~\eqref{eq:MPR Psis from H} the expression $\Psi^*(w,d\phi_w\tp\xi)$ is also invariant under this choice. Therefore we can define $\hat\Psi^*(\rho,\xi)$ by \eqref{eq:gent hat Psis from Psis}; it is easily checked that it its convex dual is given by \eqref{eq:gent hat Psi from Psi}. We can write:
\begin{align*}
  \hat\L(\rho,s) &= \inf_{d\phi_w j =s} \Big\{ \Psi\big(w,j-b(w)\big) + \Psi^*\big(w,-d_w\F(w)\big) + \langle d_w\F(w),j\rangle \Big\} \\
    &=\inf_{d\phi_w j =s} \Psi(w,j) + \Psi^*\big(w,-d\phi_w\tp d_\rho\hat\F(\rho)\big) + \langle d_\rho\hat\F(\rho),s\rangle \\
    &=\hat\Psi(\rho,s) + \hat\Psi^*\big(\rho,-d_\rho\hat\F(\rho)\big) + \langle d_\rho\hat\F(\rho),s \rangle,
\end{align*}
and hence $\hat\L$ induces the GGS $(\X,\hat\Psi,\hat\F)$, which is unique by Theorem~\eqref{th:gent MPR}.

For the other direction, assume that $\hat\L$ induces a GGS $(\X,\hat\Psi^*,\hat\F)$. Define $b(w):=d_\zeta\H\big(w,d_w\F(w)\big)$. Then by the invariance condition $\H(w,d\phi_w\tp\xi)  = \hat \H(\rho,\xi)$ and by \eqref{eq:gent MPR condition on H}:
\begin{multline*}
  d\phi_w b(w) = d\phi_w d_\zeta\H\big(w,d_w\F(w)\big) = d\phi_w d_\zeta\H\big(w,d\phi_w d_\rho\hat\F(\rho)\big) \\ = d_\xi\H\big(\rho,d_\rho\hat\F(\rho)\big)=0.
\end{multline*}
Now define $\Psi^*$ by \eqref{eq:gent KLMP Psis from H}. In particular $\Psi^*\big(w,-d\F(w)\big) = - \H\big(w,d_w\F(w)\big)$ since $\langle d_w\F(w),b(w)\rangle =0$ and, by the definition of L-functions, $\H(w,0)=0$. Then~\eqref{eq:gent KLMP H induce pGGEN} holds and hence by Theorem~\ref{th:gent KLMP} the flux L-function $\L$ induces the pGGEN $(\W,\Psi,\F,b)$.

\end{proof}

Due to the non-uniqueness of induced GGEN systems, there is no similar `if and only if' statement for the Generic setting. Nevertheless, in one direction, the GGEN analogue of Theorem~\ref{th:gent flux pGGEN equiv state GGS} is:

\begin{proposition} Assume that an L-function $\L:T\W\to\RR$ induces a GGEN $(\W,\Psi,\F,L,\E)$ where
\begin{align}
  \hat\F\big(\phi\lbrack w\rbrack\big)=\F(w) &&\text{and}&& 
  \hat\E\big(\phi\lbrack w\rbrack\big)=\E(w),
    \qquad\text{(up to constants)}
    \label{eq:hat F hat E from F E}
\end{align}
for some $\hat\F,\hat\E:\X\to\RR$, and that
\begin{equation}
  d\phi_w L(w)d\phi_w\tp=:\hat L(\rho) \text{ depends on $w$ through $\rho=\phi\lbrack w\rbrack$ only}.
\label{eq:hat Poisson}
\end{equation}
Then the L-function $\hat\L(\rho,s)$ given by \eqref{eq:gent hatL=inf L} induces a GGEN $(\X,\hat\Psi,\hat\F,\hat L,\hat\E)$ for some dissipation potential $\hat\Psi$.

If in addition, $\H$ satisfies the invariance principle~\eqref{eq:gent invariance}, then $\hat\Psi$ and $\hat\Psi^*$ are related to $\Psi$ and $\Psi^*$ through \eqref{eq:gent hat Psi from Psi} and \eqref{eq:gent hat Psis from Psis}.
%\label{prop:flux GGEN then state GGEN}
\end{proposition}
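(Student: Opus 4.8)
The plan is to reduce the Generic statement to the gradient case by the shift trick of Lemma~\ref{lem:gent shift GGS into GGEN}, exactly as in the proof of the Generic analogue of Theorems~\ref{th:gent MPR} and~\ref{th:gent KLMP}, and then transport the resulting gradient structure through $\phi$ by Proposition~\ref{prop:gent flux pGGEN then state GGS}. Before that, I would check that $\hat L(\rho):=d\phi_w L(w)d\phi_w\tp$, which is well defined by~\eqref{eq:hat Poisson}, is genuinely a Poisson structure on $\X$. Skew symmetry is immediate from that of $L$, since $\langle\xi_1,\hat L\xi_2\rangle=\langle d\phi_w\tp\xi_1,L\,d\phi_w\tp\xi_2\rangle$. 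For the Jacobi identity I would use that every $\hat\F_i:\X\to\RR$ pulls back to $\F_i:=\hat\F_i\circ\phi$ with $d\F_i(w)=d\phi_w\tp d\hat\F_i(\rho)$, so that $\{\hat\F_1,\hat\F_2\}_{\hat L}\circ\phi=\{\F_1,\F_2\}_L$ as functions on $\W$. Applying this twice, the cyclic Jacobi sum for $\hat L$ pulls back to the vanishing cyclic Jacobi sum for $L$, and since $\phi$ is surjective the sum on $\X$ vanishes as well. This transfer of Jacobi's identity is the one genuinely structural step, and I expect it to be the main obstacle.

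Next I would run the shift argument. As $\L$ induces the GGEN, the two non-interaction conditions~\eqref{eq:gent NIC Psi*DE} and~\eqref{eq:gent NIC LDF} hold on $\W$, so by Lemma~\ref{lem:gent shift GGS into GGEN} the shifted L-function $\tilde\L(w,j):=\L\big(w,j+L(w)d\E(w)\big)$ induces the GGS $(\W,\Psi,\F)$. Using $\E=\hat\E\circ\phi$ (hence $d\E=d\phi_w\tp d\hat\E$) together with~\eqref{eq:hat Poisson}, the change of variables $j\mapsto j+L(w)d\E(w)$ in the contraction~\eqref{eq:gent hatL=inf L} shows that the contraction of $\tilde\L$ equals a shift of $\hat\L$, namely $\widehat{\tilde\L}(\rho,s)=\hat\L\big(\rho,s+\hat L(\rho)d\hat\E(\rho)\big)$; I would verify this commuting-square identity carefully, since the direction of the shift matters. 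Because a GGS is a pGGEN with vanishing drift $b=0$ (which trivially satisfies $d\phi_w b=0$) and $\hat\F\circ\phi=\F$ by~\eqref{eq:hat F hat E from F E}, Proposition~\ref{prop:gent flux pGGEN then state GGS} then applies to $\tilde\L$ and yields that $\widehat{\tilde\L}$ induces a GGS $(\X,\hat\Psi,\hat\F)$ for some dissipation potential $\hat\Psi$.

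It remains to undo the shift on the state side, for which I would verify the two state non-interaction conditions for $(\hat L,\hat\E,\hat\F,\hat\Psi)$. The first, $\hat L\,d\hat\F=d\phi_w L\,d\phi_w\tp d\hat\F=d\phi_w L\,d\F=0$, is immediate from~\eqref{eq:gent NIC Psi*DE}. The second, $\hat\Psi^*(\rho,\xi+\lambda d\hat\E)=\hat\Psi^*(\rho,\xi)$, I would obtain from the explicit formula~\eqref{eq:MPR Psis from H} for the GGS potential $\hat\Psi^*$ induced by $\widehat{\tilde\L}$, combined with the dual shift relation $\widehat{\tilde\H}(\rho,\xi)=\hat\H(\rho,\xi)-\langle\xi,\hat L\,d\hat\E\rangle$ and the invariance $\hat\H(\rho,\eta+\lambda d\hat\E)=\hat\H(\rho,\eta)$; this last invariance follows, \emph{without} assuming~\eqref{eq:gent invariance}, by passing the $\W$-side condition~\eqref{eq:gent NIC condition on H} through the dual contraction formula $\hat\H(\rho,\eta)=\sup_{\phi\lbrack w\rbrack=\rho}\H(w,d\phi_w\tp\eta)$, using $d\phi_w\tp d\hat\E=d\E$ and the skew symmetry of $\hat L$. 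With both conditions in hand and $\hat L$ Poisson, Lemma~\ref{lem:gent shift GGS into GGEN} applied on $\X$ turns the GGS for $\widehat{\tilde\L}=\hat\L\big(\cdot,\cdot+\hat L\,d\hat\E\big)$ back into the GGEN $(\X,\hat\Psi,\hat\F,\hat L,\hat\E)$ induced by $\hat\L$.

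For the addendum under the invariance condition~\eqref{eq:gent invariance}, I would first observe that $\tilde\H(w,d\phi_w\tp\xi)=\H(w,d\phi_w\tp\xi)-\langle\xi,\hat L(\rho)d\hat\E(\rho)\rangle$ is again independent of $w\in\phi^{-1}\lbrack\rho\rbrack$: the first term by~\eqref{eq:gent invariance} and the second by~\eqref{eq:hat Poisson}. Hence $\tilde\H$ also satisfies~\eqref{eq:gent invariance}, and applying the explicit direction of Theorem~\ref{th:gent flux pGGEN equiv state GGS} to $\tilde\L$ (a pGGEN with zero drift) produces the relations~\eqref{eq:gent hat Psi from Psi} and~\eqref{eq:gent hat Psis from Psis} between the GGS potentials $\hat\Psi,\hat\Psi^*$ and $\Psi,\Psi^*$. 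Since the shift leaves $\Psi$ unchanged, these are exactly the GGEN dissipation potentials, so the stated relations hold, completing the proof.
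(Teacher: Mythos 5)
Your proposal is correct and follows essentially the same route as the paper's proof: reduce to the GGS case via the shift Lemma~\ref{lem:gent shift GGS into GGEN}, observe that contraction commutes with the shift so that $\widehat{\tilde\L}(\rho,s)=\hat\L\big(\rho,s+\hat L(\rho)d\hat\E(\rho)\big)$, transfer the Poisson property of $\hat L$ through the pulled-back bracket $\{\hat\F_1,\hat\F_2\}_{\hat L}\circ\phi=\{\hat\F_1\circ\phi,\hat\F_2\circ\phi\}_L$, and verify both non-interaction conditions on $\X$ using the dual contraction formula $\hat\H(\rho,\xi)=\sup_{\phi\lbrack w\rbrack=\rho}\H\big(w,d\phi_w\tp\xi\big)$ together with the $\W$-side condition~\eqref{eq:gent NIC condition on H}. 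Your write-up is, if anything, slightly more explicit than the paper's (the Jacobi-identity transfer via surjectivity of $\phi$, and the observation that $\tilde\H$ inherits the invariance~\eqref{eq:gent invariance} so that Theorem~\ref{th:gent flux pGGEN equiv state GGS} applies for the addendum), but the substance is identical.
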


\begin{proof} We again apply Lemma~\ref{lem:gent shift GGS into GGEN} to transform the problem into a problem of GGSs. Indeed, the L-function $\tilde\L(w,j):=\L\big(w,j+L(w)d_w\E(w)\big)$ induces the GGS $(\W,\Psi,\F)$. Hence by Proposition~\ref{prop:gent flux pGGEN then state GGS}, a GGS $(\X,\hat\Psi,\hat\F)$ (for some $\hat\Psi$) is induced by the L-function
\begin{align*}
  \hat{\tilde \L}(\rho,s) &:= \inf_{\phi\lbrack w\rbrack=\rho}\,\inf_{d\phi_w j=s} \tilde\L(w,j) \\
    &=\inf_{\phi\lbrack w\rbrack=\rho}\,\inf_{d\phi_w(j-L(w)d_w\E(w))=s} \L\big(w,j\big) \\
    &=\hat\L\big(\rho,s+\hat L(\rho)d_\rho\E(\rho)\big).
\end{align*}
If we can now validate that $\hat L$ is a Poisson structure, and that the non-interaction conditions are satisfied for $(\X,\hat\Psi,\hat\F,\hat L,\hat\E)$, then Lemma~\ref{lem:gent shift GGS into GGEN} concludes the proof.

For the Poisson structure, note that, for any smooth $\hat\F_1,\hat\F_2:\X\to\RR$, the Lie bracket remains unaltered:
\begin{align*}
  \{\hat\F_1,\hat\F_2\}_{\hat L}\big(\phi\lbrack w\rbrack\big)
%&= \big\langle d_\rho\F_2(\phi\lbrack w\rbrack),\hat L(\phi\lbrack w\rbrack) d_\rho\F_1(\phi\lbrack w\rbrack)\big\rangle \\
%    &=\big\langle d\phi_w\tp d_\rho\F_1(\phi\lbrack w\rbrack),L(w) d\phi_w\tp d_\rho\F_2(\phi\lbrack w\rbrack)\big\rangle \\
%    &=\big\langle d_w(\F_1\circ\phi)(w),L(w) d_w(\F_2\circ\phi)(w)\big\rangle \\
    &=\{\F_1\circ\phi,\F_2\circ\phi\}_L(w).
\end{align*}
The non-interaction condition~\eqref{eq:gent NIC LDF} is clearly satisfied as for any $w\in\W$ we have
\begin{equation*}
  \hat L\big(\phi\lbrack w\rbrack\big)d_\rho\hat\F\big(\phi\lbrack w\rbrack\big) = d\phi_w L(w) d\phi_w\tp d_\rho\hat\F\big(\phi\lbrack w\rbrack\big) = d\phi_w L(w) d_w\F(w) = 0.
\end{equation*}
To check the other non-interaction condition~\eqref{eq:gent NIC Psi*DE} we use the equivalent formulation~\eqref{eq:gent NIC condition on H}. Indeed, for any $(\rho,\xi)\in T^*\X$ and $\lambda\in\RR$,
\begin{multline*}
  \hat\H\big(\rho,\xi + \lambda d_\rho d\hat\E(\rho)\big) = \sup_{\phi\lbrack w\rbrack=\rho} \H\Big(w,d\phi_w\tp(\xi + \lambda d_\rho\hat\E(\phi\lbrack w\rbrack)\big)\Big)\\
    = \sup_{\phi\lbrack w\rbrack=\rho} \H\big(w,d\phi_w\tp\xi + \lambda d_w\E(w)\big)
    = \sup_{\phi\lbrack w\rbrack=\rho} \H(w,d\phi_w\tp\xi) = \hat\H\big(\rho,\xi).
\end{multline*}

Finally, if $\H$ satisfies the invariance property~\eqref{eq:gent invariance}, then Proposition~\ref{prop:gent flux pGGEN then state GGS} yields relation~\eqref{eq:gent hat Psi from Psi} and \eqref{eq:gent hat Psis from Psis}.
\end{proof}

Condition~\eqref{eq:hat Poisson} is in a sense a natural one as the following result shows:
\begin{proposition} Assume that L-functions $\L$ and $\hat\L$ induce two GGENs $(\W,\Psi,\F,L,\E)$ and $(\X,\hat\Psi,\hat\F,\hat L,\hat\E)$, where $\F,\E$ are related to $\hat\F,\hat\E$ by \eqref{eq:hat F hat E from F E}, and $\H$ satisfies the invariance property~\eqref{eq:gent invariance}. Then
\begin{equation*}
  \hat L\big(\phi\lbrack w\rbrack\big)d_\rho\hat\E\big(\phi\lbrack w\rbrack\big) = d\phi_w L(w)d\phi_w\tp d_\rho\hat\E\big(\phi\lbrack w\rbrack\big).
\end{equation*}
\end{proposition}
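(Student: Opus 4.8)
The plan is to reduce both sides of the claimed identity to derivatives of the two Hamiltonians $\H$ and $\hat\H$, and then to connect these via the invariance property. The essential ingredient is condition (iii) of the GGEN characterisation \eqref{eq:gent GGEN condition on H}, which expresses the Hamiltonian drift as a $\zeta$-derivative of the dual; applied on both spaces it gives
\[
  \hat L(\rho)\,d_\rho\hat\E(\rho)=d_\xi\hat\H\big(\rho,d_\rho\hat\F(\rho)\big)
  \qquad\text{and}\qquad
  d_\zeta\H\big(w,d_w\F(w)\big)=L(w)\,d_w\E(w),
\]
with $\rho=\phi\lbrack w\rbrack$. The first of these already rewrites the left-hand side of the proposition as a $\xi$-derivative of $\hat\H$, so the whole task is to propagate that derivative down to the flux level.

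First I would invoke the invariance property \eqref{eq:gent invariance}, which (exactly as used in the proof of Theorem~\ref{th:gent flux pGGEN equiv state GGS}, through the contraction formula for $\hat\H$) delivers the pointwise identity $\hat\H(\rho,\xi)=\H(w,d\phi_w\tp\xi)$ for every $w\in\phi^{-1}\lbrack\rho\rbrack$. Differentiating this identity in $\xi$, with $\rho$ (hence $w$) held fixed, and using the adjoint relationship ${}_{T_w^*}\langle d\phi_w\tp\eta,v\rangle_{T_w}={}_{T_\rho^*}\langle\eta,d\phi_w v\rangle_{T_\rho}$, the chain rule through the linear map $\xi\mapsto d\phi_w\tp\xi$ yields
\[
  d_\xi\hat\H(\rho,\xi)=d\phi_w\,d_\zeta\H\big(w,d\phi_w\tp\xi\big).
\]
The one place that needs care is precisely this chain-rule step: since $d_\zeta\H$ is a \emph{tangent} vector in $T_w\W$, pulling the differential outside the $\xi$-derivative produces $d\phi_w$ acting on $T_w\W$, not its adjoint $d\phi_w\tp$; keeping the tangent/cotangent placement straight is the main bookkeeping hazard.

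Finally I would specialise to $\xi=d_\rho\hat\F(\rho)$. The compatibility relation $\hat\F\circ\phi=\F$ gives, by the chain rule, $d\phi_w\tp d_\rho\hat\F(\rho)=d_w\F(w)$, so that the flux-space instance of \eqref{eq:gent GGEN condition on H} applies and $d_\zeta\H(w,d_w\F(w))=L(w)d_w\E(w)$. Combining this with the second compatibility relation $\hat\E\circ\phi=\E$, which gives $d_w\E(w)=d\phi_w\tp d_\rho\hat\E(\rho)$, produces the chain
\[
  \hat L(\rho)\,d_\rho\hat\E(\rho)
  =d\phi_w\,d_\zeta\H\big(w,d_w\F(w)\big)
  =d\phi_w\,L(w)\,d_w\E(w)
  =d\phi_w\,L(w)\,d\phi_w\tp d_\rho\hat\E(\rho),
\]
which is the assertion. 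I do not expect a genuine obstacle: the argument is a diagram chase that glues the two copies of GGEN condition (iii) together with the invariance identity, and the only delicate point is the adjoint bookkeeping in differentiating through $d\phi_w\tp$ noted above.
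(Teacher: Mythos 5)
Your proof is correct and takes essentially the same route as the paper's: both combine the invariance identity $\hat\H(\rho,\xi)=\H(w,d\phi_w\tp\xi)$ with condition~\eqref{eq:gent GGEN condition on H} applied on both spaces, and a chain rule in $\xi$ to pass the derivative through $d\phi_w\tp$. Your explicit bookkeeping of the adjoint (writing $d\phi_w\tp d_\rho\hat\F$ and $d\phi_w\tp d_\rho\hat\E$, and evaluating at $d_w\F(w)$ rather than $-d_w\F(w)$) is in fact more careful than the paper's compressed display, which appears to contain typographical slips on exactly these points.
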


\begin{proof} For any $w\in\W$ and $\rho=\phi\lbrack w\rbrack$, we may write $\hat\H(\rho,\xi) = \H(w,d\phi_w\tp\xi)$, and so by \eqref{eq:gent GGEN condition on H},
\begin{multline*}
  \hat L(\rho)d_\rho\hat\E(\rho)
    = d_\xi\hat\H\big(\rho,-d_\rho\hat\F(\rho)\big)
    = d\phi_w d_\zeta\H\big(w,-d\phi_w d_\rho\hat\F(\phi\lbrack w\rbrack)\big) \\
    = d\phi_w L(w) d\phi_w d_\rho\hat\E(\rho).
\end{multline*}

\end{proof}

%%%%%%%%%%%%%%
%% examples %%
%%%%%%%%%%%%%%

\section{Diffusion}
\label{sec:diffusion}

In this section we apply the ideas of the previous section to a model for diffusion. The flux structure related to diffusion is interesting in its own right, and as far as the author is aware, previously unknown. In the next section we show how this model can be coupled with the results of Section~\ref{sec:reaction fluxes} to obtain flux and state GGSs/pGGENs for reaction-diffusion systems.

Typical microscopic models of diffusion consist of Brownian particles, or discretised versions thereof, like random walkers or an exclusion process. Since empirical fluxes are a bit easier to define on a lattice, we focus on independent random walkers\footnote{With the scaling that we use, the system of independent random walkers is `exponentially equivalent' to a system of Brownian motions, meaning they share the same hydrodynamic limit and large deviations.}.

\subsection{Diffusing particle system}

This microscopic actually has two scaling parameters: the number of particles, which we denote by $V$ for consistency with the rest of this paper, and the lattice spacing $\epsilon_V$. The speed with which $\epsilon_V\to0$ as $V\to\infty$ is irrelevant. For fixed $V$, let $\big(X_{t,i}\big)_{i=1}^{V}$ be independent random walkers on the lattice $(\epsilon_V\Z)^d$ with jump rate $\epsilon^{-2}$. Define the random concentration and (integrated, net) flux by:
\begin{align*}
  \rho\super{V}_{t}(dx)&:=\mfrac1V\#\{i=1,\hdots,V: X_{t,i}\in dx\}, \qquad \text{and}\\
  \bW\super{V}_{t,l}(dx)&:= \mfrac{\epsilon_V}{V}\#\big\{\text{jumps } \tilde x \text{ to } \tilde x + \epsilon_V \mathds1_l \text{ occurred in } (0,t): \tilde x + \tfrac{1}{2}\epsilon_V\mathds1_l\in dx\big\} \\
  &\qquad-\mfrac{\epsilon_V}{V}\#\big\{\text{jumps } \tilde x + \epsilon_V \mathds1_l\text{ to } \tilde x \text{ occurred in } (0,t): \tilde x + \tfrac{1}{2}\epsilon_V\mathds1_l\in dx\big\}.
\end{align*}
As usual $dx$ denotes a spatial area, possibly a small box surrounding one lattice site, and $\rho\super{V}_{t}$ and $\bW\super{V}_{t,l}$ are measures.

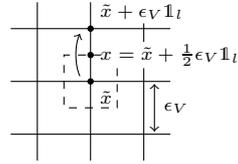
\begin{wrapfigure}{r}{0.4\linewidth}
\centering
\vspace{-0.3cm}
\begin{tikzpicture}[font=\scriptsize,scale=0.7]
  \draw(-1.5,-1)--(1.5,-1);
  \draw(-1.5, 0)--(1.5, 0);
  \draw(-1.5, 1)--(1.5, 1);
  \draw(-1,-1.5)--(-1,1.5);
  \draw(0, -1.5)--(0,0)--(0, 1.5);
  \draw(1, -1.5)--(1, 0.25); \draw(1,0.75)--(1,1.1);
  \draw[<->](1.2,-0.95)-- node[anchor=west,midway]{$\epsilon_V$}(1.2,-0.05);
%  \draw[dashed](-0.5,-0.5) rectangle (0.5,0.5);
  \draw[dashed](0.5,0.25)--(0.5,-0.5)--(-0.5,-0.5)--(-0.5,0.5)--(0.2,0.5);
  \filldraw(0,0) node[anchor=north west]{$\tilde x$} circle (0.05);
  \filldraw(0,1) node[anchor=south west]{$\tilde x+\epsilon_V\mathds1_l$} circle (0.05);
  \filldraw(0,1/2) node[anchor=west]{$x=\tilde x + \tfrac12\epsilon_V\mathds1_l$} circle (0.05);
  \draw[->] (-0.2,0.1).. controls (-0.3,0.3) and (-0.3,0.7)..(-0.2,0.9);
\end{tikzpicture}
\caption{A jump through midpoint $x$ in the direction with unit vector $\mathds1_l$.}
\label{fig:random walkers flux}
\end{wrapfigure}
Now $\rho\super{V}_t(dx)$ measures the number of particles present in (lattice points in) an area $dx$, while $\bW\super{V}_{t,l}(dx)$ measures the \emph{net} number of particles that have jumped through all midpoints in $dx$, in direction $\mathds1_l$, for $l=1,\hdots,d$, see Figure~\ref{fig:random walkers flux}. Note that both $\rho\super{V}$ and $\bW\super{V}$ are defined as measures on the lattice with shrinking distance $\epsilon_V$ between lattice points; this measure-valued formulation is needed to pass to a continuum limit later on.

The concentrations and fluxes are related by the $V$-dependent continuity equation:
\begin{align}
  \rho\super{V}_t(dx) &= \phi\super{V}\lbrack\bW\super{V}_t\rbrack(dx) := \big(\rho\super{V}_0 - \div\super{\epsilon_V}\bW\super{V}_t\big)(dx) \notag\\
    &:= \rho\super{V}_0(dx) - \mfrac1{\epsilon_V} \sum_{l=1}^d \big\lbrack \bW_{t,l}(dx + \tfrac12\epsilon+V\mathds1_l) - \bW_{t,l}(dx-\tfrac12\epsilon_V\mathds1_l) \big\rbrack.
    \label{eq:indepRW cont eq n}
\end{align} 

Using that $(\tfrac12\epsilon_V+\epsilon_V\Z)^d\subset\RR^d$, the integrated flux $\bW\super{V}_t$ is a Markov process in $\M(\RR^d)$ with generator
\begin{align}
  &(\Q\super{V}f)(\bar w) :=    \label{eq:indepRW generator}\\
  &\quad \mfrac{V}{\epsilon_V^2}\!\int\!\phi\super{V}\lbrack\bar w\rbrack(dx) \sum_{l=1}^d \big( f(\bar w-\tfrac{\epsilon_V}{V}\delta_{x-(\epsilon_V/2)\mathds1_l}) - 2f(\bar w)   
+ f(\bar w+\tfrac{\epsilon_V}{V}\delta_{x+(\epsilon_V/2)\mathds1_l}) 
    \big). \notag
\end{align}
Here, the factor $V/\epsilon_V^2$ comes from the time scaling $\epsilon_V^2$, together with the fact that we have $V\rho(dx)$ independent particles to choose from.

\subsection{Limit and large deviations}
\label{subsec:indep limit ldp}

For a test function $f \in C^1_b\big(\M(\RR^d)\big)$, we set:
\begin{align*}
  df(\bar w)_l(x) &:=\lim_{\tau\to0} \frac{ f(\bar w+\tau\delta_x e_l) - f(\bar w) }{\tau}.
\end{align*}
The continuity equation~\eqref{eq:indepRW cont eq n} converges to the limit continuity equation (with the usual divergence operator):
\begin{align}
  \rho_t(dx) = \phi\lbrack\bar w_t\rbrack(dx) &:= \rho_0(dx) - \div_x\bar w_t(dx).
    \label{eq:indepRW cont eq}
\end{align} 
%Following \cite[Sec.~1.4]{Feng2006}, we restrict to test functionals of the form $\Phi^{\phi,p,q}(\rho,w):=\phi\big(\langle p_1,\rho\rangle, \hdots,\langle p_A,\rho\rangle, \langle q_1,w\rangle,\hdots,\langle q_B,w\rangle\big)$ for some smooth $\phi:\RR^{A+B}\to\RR$ and $p_a\in C_b(\RR^d), q_b\in C_b(\RR^d:\RR^d)$. For such test functions $d_\rho\Phi^{\phi,p,q}(\rho,w)(x)=\sum_{a=1}^A \partial_a\phi\cdot p_a(x)$ and $d_w\Phi^{\phi,p,q}(\rho,w)(x)=\sum_{b=1}^B\partial_b\phi\cdot q_b(x)$. 
With this notation, as $V\to\infty$ the generator \eqref{eq:indepRW generator} converges to (if $\phi\lbrack \bar w\rbrack(dx)=\phi\lbrack\bar w\rbrack(x)\,dx$):
\begin{align*}
  (\Q\super{\infty}f)(\bar w) &:=\int\!\div_x df(\bar w)(x)\,\phi\lbrack\bar w\rbrack(dx)
    =
    -\int\!df(\bar w)(x)\cdot\grad_x\phi\lbrack\bar w\rbrack(x)\,dx. 
%    \label{eq:indep RW limit generator}
\end{align*}
As in Subsection~\ref{subsec:RRE1 dynamics limit ldp}, the limit generator depends on derivatives of the test function only, and so the process $\bar W\super{V}_t$ converges (pathwise in probability) to the deterministic path satisfying Fick's Law:
\begin{equation*}
  \dot{\bar w}_t = -\grad_x\phi\lbrack \bar w_t\rbrack.
%\label{eq:indepRW limit}
\end{equation*}
Naturally, combining this equation with the continuity equation~\eqref{eq:indepRW cont eq} yields the diffusion equation for the empirical measure:
\begin{equation*}
  \dot\rho_t = \tfrac{d}{dt} \phi\lbrack\bar w_t\rbrack = d\phi_{\bar w_t}\dot{\bar w}_t = \lapl_x\phi\lbrack\bar w_t\rbrack = \lapl_x\rho_t.
\end{equation*}

Similarly, we derive the dynamic large deviations by studying the non-linear generator:
\begin{align}
  (\H\super{V}f)(\bar w)&:=\mfrac1V \e^{-V f(\bar w)} \big(\Q\super{V} \e^{Vf}\big)(\bar w) \notag\\
    &= \mfrac1{\epsilon_V^2} \int\!\phi\super{V}\lbrack\bar w\rbrack(dx)\times \notag\\
      &\hspace{0.5cm} \sum_{l=1}^d \e^{V f(\bar w-\tfrac{\epsilon_V}{V}\delta_{x-(\epsilon_V/2)\mathds1_l}) - Vf(\bar w)} -2 
                               +\e^{Vf(\bar w+\tfrac{\epsilon_V}{V}\delta_{x+(\epsilon_V/2}\mathds1_l) - Vf(w)} \notag\\
    &\xrightarrow{V\to\infty} \int\!\big( \div_x df(\bar w)(x) + \big\lvert df(\bar w)(x) \big\rvert^2\big) \, \phi\lbrack \bar w\rbrack(dx),
\label{eq:indepRW nonlinear semigroup}
\end{align}
which follows from expanding the exponentials (with order-$\epsilon_V$ exponents) up to second order. Then the following large-deviation principle on flux space holds:
\begin{align}
  &\hspace{1cm}\Prob\super{V}\!\big(\bW\super{V}_{(\cdot)}\approx {\bar w}_{(\cdot)}\big) \stackrel{V\to\infty}{\sim}  \e^{-V\int_0^T\!\L(\bar w_t,\dot{\bar w}_t)\,dt}, \notag
\intertext{with}
  \H(\bar w,\bar\zeta)&:=\int\!\big( \div_x\bar\zeta(x)+\lvert\bar\zeta(x)\rvert^2 \big) \phi\lbrack\bar{w}\rbrack(dx) 
    =\lVert\bar\zeta\rVert_{L^2(\phi\lbrack\bar w\rbrack)}^2 - \langle \bar\zeta,\grad_x \phi\lbrack\bar w\rbrack\rangle, \quad\text{and}   \label{eq:indepRW H}\\
  \L(\bar w,\bar\jmath)&:=\sup_{\bar\zeta} \,\langle\bar\zeta,\bar\jmath\rangle - \H(\bar w,\bar\zeta)
    =\mfrac14\lVert \bar\jmath + \grad_x\phi\lbrack\bar w\rbrack \rVert_{L^2(1/\phi\lbrack\bar w\rbrack)}^2.
  \notag%\label{eq:indepRW L}
\end{align}

Note that $d\phi_{\bar w}=-\div_x$ is independent of $\bar w$, and by~\eqref{eq:indepRW H}, the invariance condition~\eqref{eq:gent invariance} is satisfied. Hence by the contraction principle~\cite[Th.~4.2.1]{Dembo1998}, one obtains the large-deviation principle corresponding to the states (empirical measures):
\begin{align}
  &\hspace{2cm}\Prob\super{V}\!\big(\rho\super{V}_{(\cdot)}\approx \rho_{(\cdot)}\big) \stackrel{V\to\infty}{\sim}  \e^{-V\int_0^T\!\hat\L(\rho_t,\dot\rho_t)\,dt}, \notag
\intertext{with}
  &\hat\L(\phi\lbrack \bar w\rbrack,s):=\inf_{\substack{\bar\jmath\in T_{\bar w}:\,-\!\div_x\bar\jmath = s}} \L(\bar w ,\bar\jmath)
    =\mfrac14\lVert s-\lapl_x\phi\lbrack\bar w\rbrack\rVert^2_{\mathring H^{-1}(\phi\lbrack\bar w\rbrack)},
  \label{eq:indepRW hatL}
\end{align}
using the notation $\lVert s\rVert_{\mathring H^{-1}(\rho)}^2:=\sup_{\xi} 2 s\cdot\xi - \lVert\xi\rVert_{\mathring H^{1}(\rho)}^2:=\sup_{\xi} 2 s\cdot\xi - \lVert\grad_x\xi\rVert_{L^2(\rho)}^2$.

\subsection{Induced GGSs in flux and state space}

We can now apply Theorem~\ref{th:gent MPR} to extract a GGS from the L-function $\L$. We first choose the `naive' flat manifold of non-negative vector measures $\W:=\M_+(\RR^d;\RR^d)$ (equipped with the flat total variation metric). It is easily checked that condition~\eqref{eq:gent MPR condition on H} holds for the free energy given by
\footnote{This expression can again be seen as a relative entropy, cf.~\eqref{eq:RRE2 F}, but now with respect to the Lebesgue measure, where the measure of the whole space -- in this case infinity -- is omitted. See also~\cite[Prop.~3.2]{MielkePeletierRenger2014} for a general result in locally finite measure spaces.}: 
\begin{equation}
  \F(\bar w):=\mfrac12\int\!\phi\lbrack\bar w\rbrack(dx)\log\phi\lbrack\bar w\rbrack(x) - \phi\lbrack\bar w\rbrack(dx),
\label{eq:indepRW F}
\end{equation}
where we identify $\phi\lbrack\bar w\rbrack(dx) = \phi\lbrack\bar w\rbrack(x)\,dx$, and we implicitly set $\F(w)=\infty$ whenever the measure is not absolutely continuous. The dissipation potentials are obtained from~\eqref{eq:MPR Psis from H} and \eqref{eq:gent Psis from Psi}, which yields
\begin{align}
  \Psi^*(\bar w,\bar\zeta):=\lVert\bar\zeta\rVert^2_{L^2(\phi\lbrack\bar w\rbrack)}
  &&\text{and}&&
  \Psi(\bar w,\bar\jmath):=\mfrac14\lVert \bar\jmath \rVert^2_{L^2(1/\phi\lbrack\bar w\rbrack)}.
\label{eq:indepRW Psis and Psi}
\end{align}
Theorem~\ref{th:gent MPR} states that $\L$ induces the GGS $(\M_+(\RR^d),\Psi,\F)$ on flux space. 

For the state space, it is well-known that the state L-function~\eqref{eq:indepRW hatL} induces the entropy-Wasserstein gradient flow of the entropy functional~\cite{Adams2011,Duong2013a,Erbar2015,MielkePeletierRenger2014}. By the theory developed in Section~\ref{sec:general theory}, we can now see how this gradient structure is related to the flux gradient structure. Indeed, the flux free energy $\F$ depends on state only, i.e. $\F(\bar w)=\hat\F(\phi\lbrack\bar w\rbrack))$, where $\hat\F(\rho)=\mfrac12\int\!\rho(dx)\log\rho(x) - \rho(dx)$, and so by Proposition~\ref{prop:gent flux pGGEN then state GGS} the state L-function $\hat\L$ induces a GGS driven by $\hat\F$. Moreover, since the invariance condition~\eqref{eq:gent invariance} holds, the dissipation potentials are related by \eqref{eq:gent hat diss from diss}:
\begin{align*}
  \hat\Psi^*(\rho,\xi) &= \lVert\grad_x\xi\rVert^2_{L^2(\rho)}=:\lVert\xi\rVert^2_{\mathring H^1(\rho)},\\
  \hat\Psi(\rho,s)&=\inf_{-\div_x\bar\jmath=s} \mfrac14\lVert \bar\jmath \rVert^2_{L^2(1/\rho)} =: \mfrac14\lVert s\rVert^2_{\mathring H^{-1}(\rho)}.
\end{align*}

\subsection{A new geometry}

The form of the dissipation potential $\Psi$ and $\hat\Psi$ suggests that it is more natural to use different manifolds in the spirit of Remark~\ref{rem:new manifold}. For the state space this points to the space $\W:=\P_2(\RR^d)=\{\rho\in\P(\RR^d):\int\!x^2\,\rho(dx)<\infty\}$ of probability measures of finite second moment space, equipped with the Monge-Kantorovich-Wasserstein metric:
\begin{equation*}
  d_{\P_2}(\rho_0,\rho_1)^2:=\inf_{\substack{\gamma\in \P(\RR^d\times\RR^d):\\ \gamma(\cdot\times\RR^d)=\rho_0(\cdot)\\ \gamma(\RR^d\times\cdot)=\rho_1(\cdot)}} \iint_{\RR^d\times\RR^d}\!\lvert x-y\rvert^2\,\gamma(dx\,dy).
\end{equation*}
with tangent and cotangent space $T_\rho=\mathring H^{-1}(\rho)$ and $T_\rho^*=\mathring H^{1}(\rho)$. For this setting, the inverse metric tensor $K_{\P_2}(\rho):T_\rho^*\to T_\rho$ is known by the Benamou-Brenier formula~\cite[Th.~8.1]{Villani2009} to be $K_{\P_2}(\rho)\xi:=-2\div_x\rho\grad_x\xi$, so that the GGS is indeed the entropy-Wasserstein gradient flow~\cite{Jordan1998}:
\begin{equation*}
  \lapl_x\rho_t  = \dot\rho_t = d_\xi\hat\Psi^*\big(\rho_t,-d\hat\F(\rho_t)\big)=-2K_{\P_2}(\rho_t)d\hat\F(\rho_t) =: -2\Grad_{\rho_t}\hat\F(\rho_t).
\end{equation*}

Motivated by this observation we can take for the flux manifold the space of signed vector measures of finite \emph{first} moment $\M_1(\RR^d;\RR^d):=\{\bar w\in \M(\RR^d;\RR^d): \int\!\lvert x\rvert\,\lvert \bar w\rvert(dx)<\infty\}$. This choice guarantees that the corresponding states have finite second moment (once $\rho_0\in\P_2(\RR^d)$):
\begin{equation*}
  \int\!\lvert x\rvert^2\,\rho(dx) = \int\!\lvert x\rvert^2\,\rho_0(dx) + \int\!x\cdot \bar w(dx) - \textit{non-neg. bnd. term.} <\infty. 
\end{equation*}
Moreover, we can now use the dissipation potential to construct a natural metric on $\Y$:
\begin{equation}
  d_{\M_1}(\bar w_0,\bar w_1)^2:=\inf_{\substack{\tilde w:(0,1)\to\W:\\\tilde w_0=\bar w_0,\,\tilde w_1=\bar w_1}} \int_0^1\!\lVert\dot {\tilde w}_t\rVert^2_{L^2(1/(\rho_0-\div_x\tilde w_t))}\,dt,
\label{eq:indepRW metric}
\end{equation}
where the infimum runs over paths of fluxes for which $\rho_0-\div\tilde w_t$ remains non-negative. The corresponding tangent and cotangent spaces (in the interior of the domain) are simply $T_{\bar w} = L^2(1/(\rho_0-\div_x\bar w))$ and $T_{\bar w}^*=L^2(\rho_0-\div_x\bar w)$ and the inverse metric tensor $K_{\M_1}(\bar w):T_{\bar w}^*\to T_{\bar w}$ is $K_{\M_1}(\bar w)\bar\zeta=\bar\zeta/(\rho_0-\div_x\bar w)$. This yields an interesting geometry in flux space, which, as far as the author is aware, is still unknown in the literature.

%\begin{remark} Just like the diffusion equation can also be written as the (flat) $L^2$-gradient flow of the functional $\int\!\lvert\grad\rho(x)\rvert^2\,dx$, the flux equation~\eqref{eq:indep RW limit} can be written as the $L^2$-gradient flow of the functional $\int\!\lvert w(x)\rvert^2\,dx$.
%\end{remark}

\section{A simple reaction-diffusion model}
\label{sec:reaction-diffusion}

We now combine the models from Section~\ref{sec:reaction fluxes} and \ref{sec:diffusion} to study reaction-diffusion models in flux and state space. The stochastic particle system will now consist of `reacting random walkers'. It is known that, if the reaction networks include reactions of different orders (unimolecular, bimolecular, etc.) and we only allow particles to react if the required number of particles are present within the same site/compartment, then the model may not converge to the expected reaction-diffusion equation~\cite{PfaffelhuberPopovic}. The reason behind this is that for a multimolecular reaction, it becomes very unlikely that the required amount of reactants are all within one site/compartment; different order reactions would require different scalings. This is beyond the scope of the current paper. However we can already illustrate the combination of reaction and transport fluxes for a simple system of unimolecular equations of the type:
\begin{align*}
  \sfA \xrightarrow{\kappa_\fw} \sfB &&\text{and}&& \sfB\xrightarrow{\kappa_\bw} \sfA.
\end{align*}
In this section we consider GGSs only, hence we shall always consider net rather than one-way fluxes.

\subsection{Reacting and diffusing particle system}
%\label{subsec:rediff particle system}

Since we consider unimolecular reactions only, we can take \emph{independent} reacting random walkers on the scaled lattice $(\epsilon_V\Z)^d$, where each reaction occurs locally at each lattice site with rate $\kappa_\fw$ or $\kappa_\bw$ respectively, so that
\begin{align*}
  \tfrac1V\lambda\super{V}_\fw(\rho(dx))\equiv\kappa_\fw\rho_\sfA(dx)
  &&\text{and}&&
  \tfrac1V\lambda\super{V}_\bw(\rho(dx))\equiv\kappa_\bw\rho_\sfB(dx).
\end{align*}
For the transport mechanism, we assume that the two species $\sfA,\sfB$ hop to neighbouring lattice sites with rates $D_\sfA$ and $D_\sfB$ respectively.

As before we consider the random concentrations, as well as the integrated (net) fluxes, where we now distinguish between \underline{tr}ansport fluxes and \underline{re}action fluxes. If $X_{t,i}\in(\epsilon_V\Z^d)\subset\RR^d$ is the position and $Y_{t,i}\in\Y=\{\sfA,\sfB\}$ is the species of the $i$-th particle, then
\begin{align*}
  &\rho\super{V}_{t,y}(dx):=\mfrac1V\#\{i=1,\hdots,V: X_{t,i}\in dx \text{ and } Y_{t,i}=y\},\\
  &\bW\super{V}_{t,\tr,y,l}(dx):=\\
  &\quad \mfrac{\epsilon_V}{V}\#\big\{\text{jumps } \tilde x \text{ to } \tilde x + \epsilon_V \mathds1_l \text{ of species $y$ occurred in } (0,t): \tilde x + \tfrac{1}{2}\epsilon_V\mathds1_l\in dx\big\} \\
  &\quad-\mfrac{\epsilon_V}{V}\#\big\{\text{jumps } \tilde x + \epsilon_V \mathds1_l\text{ to } \tilde x \text{ of species $y$ occurred in } (0,t): \tilde x + \tfrac{1}{2}\epsilon_V\mathds1_l\in dx\big\}, \\
  &\bW\super{V}_{t,\re}(dx) := \mfrac1V\#\big\{\text{forward reactions } r \text{ occurred in } (0,t) \text{ and in area } dx \big\},\\
  &\qquad\qquad\quad - \mfrac1V\#\big\{\text{backward reactions } r \text{ occurred in } (0,t) \text{ and in area } dx\big\}.
\end{align*}
The concentrations and fluxes are again related by a continuity equation:
\begin{align}
  \rho\super{V}_{t,y}(dx) &= \phi\super{V}\lbrack\bW\super{V}_t\rbrack_y(dx) := \big(\rho\super{V}_{0} - \div\super{\epsilon_V}\bW\super{V}_{t,\tr} + \Gamma \bW\super{V}_{t,\re}\big)_y(dx),
\label{eq:rediff V-cont eq}
\end{align} 
where the discrete divergence is as in \eqref{eq:indepRW cont eq n}, and $\Gamma=(-1,1)$ is the matrix consisting of one state change vector corresponding to a forward reaction.

The pair $\bW\super{V}_t:=(\bW\super{V}_{t,\tr},\bW\super{V}_{t,\re})$ is then a Markov process with generator
\begin{equation*}
  (\Q\super{V}f)(\bar w_\tr,\bar w_\re) := (\Q_\tr\super{V}f)(\bar w_\tr,\bar w_\re) + (\Q\super{V}_\re f)(\bar w_\tr,\bar w_\re),
\end{equation*}
where
\begin{align*}
  &(\Q_\tr\super{V}f)(\bar w_\tr,\bar w_\re) \\
  &\hspace{1.4cm}:= \mfrac{V}{\epsilon_V^2} \sum_{y=\sfA,\sfB}D_y\!\int\!\phi\super{V}\lbrack\bar w\rbrack_y(dx) \sum_{l=1}^d
  \big( f(\bar w_\tr - \tfrac{\epsilon_V}{V}\delta_{x-(\epsilon_V/2)\mathds1_l}\mathds1_y,\bar w_\re) \\
  &\hspace{4cm}- 2f(\bar w_\tr,\bar w_\re)
               + f(\bar w_\tr+\tfrac{\epsilon_V}{V}\delta_{x+(\epsilon_V/2)\mathds1_l}\mathds1_y,\bar w_\re) \big), \\
  &(\Q_\tr\super{V}f)(\bar w_\tr,\bar w_\re)\\
  &\hspace{1.4cm}:=V \kappa_\fw \int\!\phi\super{V}\lbrack\bar w\rbrack_\sfA(dx) \big( f(\bar w_\tr,\bar w_\re +\tfrac1V\delta_x) - f(\bar w_\tr,\bar w_\re)\big)\\
  &\hspace{1.4cm}\quad+ V \kappa_\bw \int\!\phi\super{V}\lbrack\bar w\rbrack_\sfB(dx) \big( f(\bar w_\tr,\bar w_\re -\tfrac1V\delta_x) - f(\bar w_\tr,\bar w_\re)\big).
\end{align*}

\subsection{Limit and large deviations}

By the same procedure as in Sections~\ref{subsec:RRE1 dynamics limit ldp} and \ref{subsec:indep limit ldp}, one finds that as $V\to\infty$ and $\epsilon_V\to0$, the continuity operator~\eqref{eq:rediff V-cont eq} converges to (assuming $\phi\lbrack \bar w\rbrack_y(dx) = \phi\lbrack \bar w\rbrack_y(x)\,dx$):
\begin{equation}
  \phi\lbrack \bar w\rbrack_y(x):=\rho_{0,y}(x) - \div_x \bar w_{\tr,y}(x) + (\Gamma\bar w_{\re})_y(x),
\label{eq:rediff cont eq}
\end{equation}
and the process converges (pathwise in probability) to the solution of the system:
\begin{align*}
  \dot{\bar w}_{t,\tr,y}(x) &= -D_y\grad_x\phi\lbrack \bar w_t\rbrack_y(x),\\
    \dot{\bar w}_{t,\re}(x) &= \kappa_\fw \phi\lbrack \bar w_t\rbrack_\sfA(x) - \kappa_\bw \phi\lbrack \bar w_t\rbrack_\sfB(x).
\end{align*}
Indeed, putting these together yields the reaction-diffusion equation for the limit concentrations:
\begin{align*}
  \dot\rho_{t,\sfA}(x)= D_\sfA\lapl_x \rho_{t,\sfA}(x) - \kappa_\fw\rho_{t,\sfA}(x) + \kappa_\bw\rho_{t,\sfB}(x),\\
  \dot\rho_{t,\sfB}(x)= D_\sfB\lapl_x \rho_{t,\sfB}(x) + \kappa_\fw\rho_{t,\sfA}(x) - \kappa_\bw\rho_{t,\sfB}(x).
\end{align*}

To find the corresponding large deviations, we combine \eqref{eq:RRE2 nonlinear semigroup} and \eqref{eq:indepRW nonlinear semigroup} to calculate the non-linear generator:
\begin{multline*}
  (\H\super{V}f)(\bar w_\tr,\bar w_\re) :=\mfrac1V \e^{-V f(\bar w_\tr,\bar w_\re)} \big(\Q\super{V} \e^{Vf}\big)(\bar w_\tr,\bar w_\re)\\
  \xrightarrow{V\to\infty} \sum_{y=\sfA,\sfB}D_y\int\!\big( \div_x \partial_{\bar w_{\tr,y}} f(\bar w_\tr,\bar w_\re)(x) + \big\lvert \partial_{\bar w_{\tr,y}} f(\bar w_\tr,\bar w_\re)(x) \big\rvert^2\big) \, \phi\lbrack \bar w\rbrack_y(dx)\\
   + \kappa_\fw\int\!\phi\lbrack\bar w \rbrack_\sfA(dx) \big( \e^{\partial_{\bar w_\re}f(\bar w_\tr,\bar w_\re)} -1 \big)\\
   + \kappa_\bw\int\!\phi\lbrack\bar w \rbrack_\sfB(dx) \big( \e^{-\partial_{\bar w_\re}f(\bar w_\tr,\bar w_\re)} -1 \big).
\end{multline*}

Let us again abbreviate $\bar\zeta=(\bar\zeta_{\tr,y,l}(x),\bar\zeta_{\re}(x))$ and $\bar\jmath = (\bar\jmath_{\tr,y,l}(x),\bar\jmath_{\re}(x))$. The limiting non-linear generator can now be split into
\begin{align}
  &\H(\bar w,\bar\zeta) :=\H_\tr(\bar w,\bar\zeta_\tr)+\H_\re(\bar w,\bar\zeta_\re),  \label{eq:rediff H} \\
  &\quad\H_\tr(\bar w,\bar\zeta_\tr) :=\sum_{y=\sfA,\sfB} D_y\big(\lVert\bar\zeta_{\tr,y}\rVert_{L^2(\phi\lbrack\bar w\rbrack_y)}^2 - \langle \bar\zeta_{\tr,y},\grad_x \phi\lbrack\bar w\rbrack_y\rangle\big), \notag\\
  &\quad\H_\re(\bar w,\bar\zeta_\re) :=\kappa_\fw\int\!\phi\lbrack\bar w\rbrack_\sfA(dx) \big( \e^{\bar\zeta_\re(x)}-1\big) + \kappa_\bw\int\!\phi\lbrack\bar w\rbrack_\sfB(dx) \big( \e^{-\bar\zeta_\re(x)}-1\big). \notag
\end{align}
Since each mechanism corresponds to a separate flux, the corresponding L-function also splits into two parts:
\begin{align}
  &\L(\bar w,\bar\jmath):=\sup_{\bar\zeta_\tr,\bar\zeta_\re} \,\langle\bar\zeta_\tr,\bar\jmath_\tr\rangle + \langle\bar\zeta_\re,\bar\jmath_\re\rangle- \H(\bar w,\bar\zeta)
  :=\L_\tr(\bar w,\bar\jmath_\tr)+\L_\re(\bar w,\bar\jmath_\re),
  \label{eq:rediff L}\\
  &\quad\L_\tr(\bar w,\bar\jmath_\tr)
    :=\sum_{y=\sfA,\sfB}\mfrac1{4D_y}\lVert \bar\jmath_{\tr,y} + D_y\grad_x\phi\lbrack\bar w\rbrack_y \rVert_{L^2(1/\phi\lbrack\bar w\rbrack_y)}^2, \notag\\
  &\quad\L_\re(\bar w,\bar\jmath_\re)
    :=\inf_{j_\fw-j_\bw=\bar\jmath_\re} h\big(j_\fw | \kappa_\fw\phi\lbrack \bar w\rbrack_\sfA\big) + h\big(j_\bw | \kappa_\bw\phi\lbrack\bar w\rbrack_\sfB\big), \notag
\end{align}
using the usual the relative entropy between two measures, i.e: $h(j| k):=\int\! j(dx)\log(dj/dk(x)) - j(dx) + k(dx)$ if $j\ll k$, else $h(j| k):=\infty$.

As before, the calculation above formally shows that the flux large-deviation principle holds (see for example~\cite{BodineauLagouge2012} for a similar but rigorous result):
\begin{equation*}
  \Prob\super{V}\!\big(\bW\super{V}_{(\cdot)}\approx {\bar w}_{(\cdot)}\big) \stackrel{V\to\infty}{\sim}  \e^{-V\int_0^T\!\L(\bar w_t,\dot{\bar w}_t)\,dt}.
\end{equation*}

The L-function~\eqref{eq:rediff L} splits into two parts because the only interaction between the two mechanisms occurs through the state $\phi\lbrack\bar w\rbrack$. By contrast, the corresponding state space large-deviation are much more complicated. Observe that the continuity equation~\eqref{eq:rediff cont eq} is an affine function of $\bar w$, and so $d\phi_{\bar w}\bar j$ is independent of $\bar w$, and the invariance condition~\eqref{eq:gent invariance} holds. As explained in the beginning of Section~\ref{sec:general theory}, this means that one can apply a straightforward contraction principle on the tangents to yield the large deviation cost function for the states/concentrations:
\begin{align}
  \hat \L(\phi\lbrack\bar w\rbrack,s)&:=\inf_{\substack{\bar\jmath=(\bar\jmath_\tr,\bar\jmath_\re):\\s=-\div\bar\jmath_\tr+\Gamma\bar\jmath_\re}} \L(\bar w,\bar\jmath).
\label{eq:rediff hat L}
\end{align}
This infimum reintroduces a strong interrelation between the two driving mechanisms. Indeed, for a given tangent $(\rho,s)$, the fluxes in this infimum correspond to an optimal splitting between the two mechanisms, which can be seen as an inf-convolution. Similar interactions also arise when considering multiple reaction pairs, see \cite[Sect.~3.4]{MielkePattersonPeletierRenger2017}.

\subsection{GGSs in flux and state space}

We now apply Theorem~\ref{th:gent MPR} to the reaction-diffusion setting. The symmetry condition~\eqref{eq:gent MPR condition on H} holds for the function~\eqref{eq:rediff H} if we choose the free energy functional
\begin{align*}
  \F(\bar w) :=& \mfrac12\int\!\phi\lbrack\bar w\rbrack_\sfA(dx)\log \kappa_\fw\phi\lbrack\bar w\rbrack_\sfA(x) -\phi\lbrack\bar w\rbrack_\sfA(dx)\\
               &+\mfrac12\int\!\phi\lbrack\bar w\rbrack_\sfB(dx)\log \kappa_\bw\phi\lbrack\bar w\rbrack_\sfB(x) -\phi\lbrack\bar w\rbrack_\sfB(dx).
\end{align*}
Naturally, this functional can be seen as a combination of~\eqref{eq:RRE2 F} and \eqref{eq:indepRW F}, where just like~\eqref{eq:indepRW F}, it has the form of a relative entropy with respect to a locally finite invariant measure, namely $(\pi_y(dx))_{y=\sfA,\sfB}\equiv (1/\kappa_\fw,1/\kappa_\bw)\,dx$. We find the corresponding dissipation potentials from~\eqref{eq:gent MPR GGEN Psis from H}, which is again a combination of the non-quadratic potentials~\eqref{eq:RRE2 Psis},\eqref{eq:RRE2 Psi} and the quadratic potential~\eqref{eq:indepRW Psis and Psi}:
\begin{align*}
  &\Psi^*(\bar w,\bar\zeta) :=  \Psi^*_\tr(\bar w,\bar\zeta_\tr) +  \Psi^*_\re(\bar w,\bar\zeta_\re), \\
  &\quad\Psi^*_\tr(\bar w,\bar\zeta_\tr):=D_\sfA \lVert\bar\zeta_{\tr,\sfA}\rVert_{L^2(\phi\lbrack\bar w\rbrack_\sfA)}^2 + D_\sfB \lVert\bar\zeta_{\tr,\sfB}\rVert_{L^2(\phi\lbrack\bar w\rbrack_\sfB)}^2,\\
  &\quad\Psi^*_\re(\bar w,\bar\zeta_\re):= \int\!\sigma(\bar w)(x)\big(\cosh(\bar\zeta_{\re}(x))-1\big)\,dx, \qquad\text{ and}\\
  &\Psi(\bar w,\bar\jmath) := \Psi_\tr(\bar w,\bar\jmath_\tr) + \Psi_\re(\bar w,\bar\jmath_\re), \\
  &\quad\Psi_\tr(\bar w,\bar\jmath_\tr):=\mfrac1{4D_\sfA} \lVert\bar\jmath_{\tr,\sfA}\rVert_{L^2(1/\phi\lbrack\bar w\rbrack_\sfA)}^2 + \mfrac1{4D_\sfB} \lVert\bar\jmath_{\tr,\sfB}\rVert_{L^2(1/\phi\lbrack\bar w\rbrack_\sfB)}^2,\\
  &\quad\Psi_\re(\bar w,\bar\jmath_\re):= \int\!\sigma(\bar w)(x)\big(\cosh^*(\mfrac{\bar\jmath_{\re}(x)}{\sigma(\bar w)(x)})+1\big)\,dx,
\end{align*}
with $\sigma(\bar w)(x):=2\sqrt{\kappa_\fw\kappa_\bw\phi\lbrack\bar w\rbrack_\sfA(x)\phi\lbrack\bar w\rbrack_\sfB(x)}$. Let the flux space be given by $\W=\M_1(\RR^d;\RR^d)\times\M_1(\RR^d;\RR^d)\times L^1(\RR^d)$, where the first two spaces, corresponding to the transport fluxes, are equipped with the metric~\eqref{eq:indepRW metric} introduced in the previous section. By Theorem~\ref{th:gent MPR} the flux cost function $\L$ induces the GGS $(\W,\Psi,\F)$. We stress that the dissipation potential $\Psi$ splits into two potentials for the transport and reaction mechanisms respectively, but the free energy is one and the same for both mechanisms.

Since a GGS is a special case of a pGGEN, by Theorem~\ref{th:gent flux pGGEN equiv state GGS}, the state cost function $\hat\L$ also induces a GGS $(\X,\hat\Psi,\hat\F)$, in this case in the space $\X=\P_2(\RR^d\times\{\sfA,\sfB\})$. The same result yields
\begin{equation*}
  \hat\F(\rho) := \mfrac12\int\!\rho_\sfA(dx)\log \kappa_\fw\rho_\sfA(x) - \rho_\sfA(dx) +\mfrac12\int\!\rho_\sfB(dx)\log \kappa_\bw\rho_\sfB(x) - \rho_\sfB(dx),
\end{equation*}
and, using $d\phi_{\bar w}\tp=\begin{pmatrix} \grad_x & 0\\ 0 & \grad_x\\ -1 & 1 \end{pmatrix}$:
\begin{align*}
  \hat\Psi^*(\rho,\xi)\,\, &\!\!\!\stackrel{\eqref{eq:gent hat Psis from Psis}}{=}\Psi^*(\bar w,d\phi\tp_{\bar w}\xi) = \Psi^*_\tr(\bar w, \grad_x\xi) + \Psi^*_\re(\bar w,\xi_\sfB-\xi_\sfA)\\
    &=  D_\sfA \lVert\grad_x\xi_\sfA\rVert_{L^2(\rho_\sfA)}^2 + D_\sfB \lVert\grad_x\xi_\sfB\rVert_{L^2(\rho_\sfB)}^2\\
      &\qquad + 2\int\!\sqrt{\kappa_\fw\kappa_\bw\rho_\sfA(x)\rho_\sfB(x)}\big(\cosh(\xi_\sfB(x)-\xi_\sfA(x))-1\big)\,dx,\\
  \hat\Psi(\rho,s)\,\,\,\,\,\,\,&\!\!\!\!\!\!\!\!\stackrel{\eqref{eq:gent hat Psi from Psi},\eqref{eq:rediff cont eq}}{=} \inf_{s = -\div_x \bar\jmath_{\tr} + \Gamma\bar\jmath_{\re}} \Psi_\tr(\bar w,\bar\jmath_\tr) + \Psi_\re(\bar w,\bar\jmath_\re).
\end{align*}
We stress that, analogous to the L-function~\eqref{eq:rediff hat L}, the dissipation potential $\hat\Psi$ on state space no longer splits into two parts.

\section{Discussion}

We studied gradient and (pre-) Generic structures induced by flux large deviations, and the relationship between structures induced by state large deviations. At a first glance, the resulting gradient or generic structures in flux may appear non-physical. However, in practice many induced flux structures have a free energy and dissipation potential that only depends on the integrated flux through the state of the system. Hence the main difference with energy-driven structures in state space is that the fluxes rather than velocities are being driven, which seems a very physical assumption.

It turns out that if the fluxes follow a gradient or generic flow, then so do the corresponding states. The same principle also holds in the other direction, but the flow in flux space could have an additional Hamiltonian term that is not observed when considering states only.

On a mathematical level, it can also be worthwhile to work with fluxes rather than states. As we saw in the examples, and Section~\ref{sec:reaction-diffusion} in particular, if a microscopic system consists of multiple driving mechanisms, then the corresponding Markov generator as well as the non-linear generator is a sum over these mechanisms. By considering separate fluxes for each of these mechanisms, the large-deviation L-function and its induced dissipation potential also splits into different terms for each mechanism. This decomposition can be beneficial for analytic and numerical purposes, e.g. using operator splitting techniques.

%\subsection{Hellinger-Kantorovich}
%\label{subsec:Hellinger-Kantorovich}

%\subsection{Boltzmann equation}
%\label{subsec:Boltzmann}

%\subsection{Vlasov-Fokker-Planck}
%\label{subsec:VFP} 

\section*{Acknowledgements}

This research has been funded by Deutsche Forschungsgemeinschaft (DFG) through grant CRC 1114 ``Scaling Cascades in Complex Systems'', Project C08 ``Stochastic spatial coagulation particle processes''. The author thanks H.C. \"Ottinger and his group for their valuable discussion and comments.

\bibliographystyle{alpha}
\bibliography{library}

\end{document}